\newtheorem{theorem}{Theorem}
\newtheorem{lemma}[theorem]{Lemma}
\def\T{\mathcal{T}}
\def\E{\mathbb{E}}
\def\eps{\epsilon}
\newcommand{\prob}[1]{{\sf Pr}\left(#1\right)}
\newcommand{\var}[1]{{\sf Var}\left(#1\right)}
\newcommand{\cov}[1]{{\sf Cov}\left(#1\right)}
\def\final{1}  
\def\iflong{\iffalse}
\newcommand{\knote}[1]{[{\tiny karthik: \bf #1}]\marginpar{*}}
\newcommand{\snote}[1]{[{\tiny Santosh: \bf #1}]\marginpar{*}}
\newcommand{\sidecomment}[1]{}
\newcommand{\knote}[1]{}
\newcommand{\snote}[1]{}
\newcommand{\sidecomment}[1]{}
\begin{document}

\title{\Large Algorithms for Implicit Hitting Set Problems}
\author{Karthekeyan Chandrasekaran \thanks{Georgia Institute of Technology. Supported in part by NSF awards AF-0915903 and AF-0910584. Email: {\tt karthe@gatech.edu,vempala@cc.gatech.edu}. }\\
\and Richard Karp \thanks{University of California, Berkeley. Email: {\tt karp@icsi.berkeley.edu}}\\
\and Erick Moreno-Centeno \thanks{Texas A\&M University. Email: {\tt e.moreno@tamu.edu}}\\
\and Santosh Vempala \footnotemark[1]}
\date{}

\maketitle

\begin{abstract}
A hitting set for a collection of sets is a set that has a non-empty intersection with each set in the collection; the hitting set problem is to find a hitting set of minimum cardinality. Motivated by instances of the hitting set problem where the number of sets to be hit is large, we introduce the notion of \emph{implicit hitting set problems}. In an implicit hitting set problem the collection of sets to be hit is typically too large to list explicitly; instead, an oracle
is provided which, given a set $H$, either determines that $H$ is a hitting set or returns a set that $H$ does not hit. We show a number of examples of classic implicit hitting set problems, and give a generic algorithm for solving such problems optimally. The main contribution of this paper is to show that this framework is valuable in developing approximation algorithms. We illustrate this methodology by presenting a simple on-line algorithm for the minimum feedback vertex set problem on random graphs. In particular our algorithm gives a feedback vertex set of size $n-(1/p)\log{np}(1-o(1))$ with probability at least $3/4$ for the random graph $G_{n,p}$ (the smallest feedback vertex set is of size $n-(2/p)\log{np}(1+o(1))$). We also consider a planted model for the feedback vertex set in directed random graphs. Here we show that a hitting set for a polynomial-sized subset of cycles is a hitting set for the planted random graph and this allows us to exactly recover the planted feedback vertex set.
\end{abstract}

\section{Introduction}

In the classic Hitting Set problem, we are given a universe $U$ of elements and a collection $\T$ of subsets $S_1, \ldots, S_m$ of $U$; the objective is to find a subset $H\subseteq U$ of minimum cardinality so that every subset $S_i$ in $\T$ contains at least one element from $H$. The problem is NP-hard \cite{karp-np-complete}, approximable to within $\log_2 |U|$ using a greedy algorithm, and has been studied for many interesting special cases.

There are instances of the hitting set problem where the number of subsets $|\T|$ to hit is exponential in the size of the universe. Consequently, obtaining a hitting set with approximation factor $\log_2|U|$ using the greedy algorithm which examines all subsets is unreasonable for practical applications. Our motivation is the possibility of algorithms that run in time polynomial in the size of the universe. In this paper, we introduce a framework that could be useful in developing efficient approximation algorithms for instances of the hitting set problem with exponentially many subsets to hit.

We observe that in many combinatorial problems, $\T$ has a succinct
representation that allows efficient verification of whether a candidate set
hits every subset in $\T$. Formally, in an implicit hitting set problem, the
input is a universe $U$ and a polynomial-time {\it oracle} that, given a set
$H$, either determines that $H$ is a hitting set or returns a subset that is
not hit by $H$. Thus, the collection $\T$ of subsets to hit is not specified
explicitly. The objective is to find a small hitting set by making at most
polynomial($|U|$) queries to the oracle. In Section 1.1, we show several
well-known problems that can be formulated as implicit hitting set problems.


We present a generic algorithm to obtain the optimal solution of implicit hitting set problems in Section 2. As this algorithm solves optimally the NP-hard (classic) hitting set problem as a subroutine, its worst-case running time is exponential as a function of $|U|$. The main purpose of stating the generic algorithm is to develop an intuition towards using the oracle. It suggests a natural way to use the oracle: first (1) propose a candidate hitting set $H$, then (2) use the oracle to check if the candidate set hits all the subsets, and if not obtain a subset $S$ that has not been hit, and finally (3) refine $H$ based on $S$ and repeat until a hitting set is found. 

The generic algorithm for the implicit hitting set problem is in fact a generalization of online algorithms for hitting set problems. Here, the ground set is specified in advance as before and the subsets to be hit arrive online. On obtaining a subset, the algorithm has to decide which new element to include in the hitting set and commit to the element. Thus, the online algorithm is restricted in that the refinement procedure can only add elements.  Moreover, only those subsets that have not been hit by the candidate set are revealed online thereby saving the algorithm from having to examine all subsets in $\T$. This is similar to the mistake bound learning model \cite{littlestone88}.


We apply the implicit hitting set framework and specialize the generic
algorithm to the \emph{Minimum Feedback Vertex Set} (FVS) problem: given a
graph $G(V,E)$, find a subset $S\subseteq V$ of smallest cardinality so that
every cycle in the graph contains at least one vertex from $S$. Although the
number of cycles could be exponential in the size of the graph, one can
efficiently check whether a proposed set $H$ hits all cycles ({\it i.e.,} is a
feedback vertex set) or find a cycle that is not hit by $H$ using a
breadth-first search procedure after removing the subset of vertices $H$ from
the graph. The existence of a polynomial time oracle shows that it is an
instance of the implicit hitting set problem.

The main focus of this paper is to develop algorithms that find nearly optimal hitting sets in random graphs or graphs with planted feedback vertex sets, by examining only a polynomial number of cycles. For this to be possible, we need the oracle to pick cycles that have not yet been hit in a natural yet helpful manner. If the oracle is adversarial, this could force the algorithm to examine almost all cycles. We consider two natural oracles: one that picks cycles in breath-first search (BFS) order and another that picks cycles according to their size.

We prove that if cycles in the random graph $G_{n,p}$ are obtained in a breadth-first search ordering, there is an efficient algorithm that examines a polynomial collection $\T'$ of cycles to build a nearly optimal feedback vertex set for the graph. The algorithm builds a solution iteratively by (1) proposing a candidate for a feedback vertex set in each iteration, (2) finding the next cycle that is not hit in a breadth-first ordering of all cycles, (3) augmenting the proposed set and repeating. A similar result for directed random graphs using the same algorithm follows by ignoring the orientation of the edges. Our algorithm is an online algorithm i.e., it commits to only adding and not deleting vertices from the candidate feedback vertex set.

It is evident from our results that the size of the feedback vertex set in both directed and undirected random graphs is close to $n$, for sufficiently large $p$. This motivates us to ask if a smaller planted feedback vertex set in random graphs can be recovered by using the implicit hitting set framework. This question is similar in flavor to the well-studied planted clique problem \cite{jerrumClique92,sudakov98,frieze08}, but posed in the framework of implicit hitting set problems. We consider a natural planted model for the feedback vertex set problem in directed graphs. In this model, a subset of $\delta n$ vertices, for some constant $0<\delta\leq 1$, is chosen to be the feedback vertex set. The subgraph induced on the complement is a random directed acyclic graph (DAG) and all the other arcs are chosen with probability $p$ independently. The objective is to recover the planted feedback vertex set. We prove that the optimal hitting set for cycles of bounded size is the planted feedback vertex set. Consequently, ordering the cycles according to their sizes and finding an approximately optimal hitting set for the small cycles is sufficient to recover the planted feedback vertex set. This also leads to an online algorithm when cycles are revealed in increasing order of their size with ties broken arbitrarily.

We conclude this section with some well-known examples of implicit hitting set problems.

\subsection{Implicit Hitting Set Problems}

An {\it implicit hitting set problem} is one in which, for each instance, the
set of subsets is not listed explicitly but instead is specified implicitly by
an {\it oracle}: a polynomial-time algorithm which, given a set $H \subset U$,
either certifies that $H$ is a hitting set or returns a subset that is not hit
by $H$.

\noindent Each of the following is an implicit hitting set problem:
\begin{itemize}
\item {\bf Feedback Vertex Set in a Graph or Digraph}\\
Ground Set: Set of vertices of graph or digraph $G$.\\
Subsets: Vertex sets of simple cycles in $G$.

\item {\bf Feedback Edge Set in a Digraph}\\
Ground Set: Set of edges of digraph $G$.\\
Subsets: Edge sets of simple cycles in $G$.

\item {\bf Max Cut}\\
Ground Set: Set of edges of graph $G$.\\
Subsets: Edge sets of simple odd cycles in $G$.

\item {\bf k-Matroid Intersection}\\
Ground Set: Common ground set of $k$ matroids.\\
Subsets: Subsets in the $k$ matroids.

\item {\bf Maximum Feasible Set of Linear Inequalities}\\
Ground Set: A finite set of linear inequalities.\\
Subsets: Minimal infeasible subsets of the set of linear inequalities.

\item {\bf Maximum Feasible Set of Equations of the Form $x_i - x_j = c_{ij}\ (\!\!\!\mod q)$}\\
This example is motivated by the Unique Games Conjecture.

\item {\bf Synchronization in an Acyclic Digraph}\\
Ground Set: A collection $U$ of pairs of vertices drawn from the vertex set of
an acyclic digraph $G$.\\
Subsets: Minimal collection $C$ of pairs from $U$ with the property that, if
each pair in $C$ is contracted to a single vertex, then the resulting digraph
contains a cycle.
\end{itemize}


\noindent {\bf Organization.} In Section 2, we present a generic algorithm for the optimal solution of implicit hitting set problems. Then, we focus on specializing this algorithm to obtain small feedback vertex sets in directed and undirected random graphs. We analyze the performance of this algorithm in Section 3. We then consider a planted model for the feedback vertex set problem in directed random graphs. In Section 4, we give an algorithm to recover the planted feedback vertex set by finding an approximate hitting set for a polynomial-sized subset of cycles. We prove a lower bound for the size of the feedback vertex set in random graphs in Section 5. We state our results more precisely in the next section.

\subsection{Results for Feedback Vertex Set Problems}
We consider the feedback vertex set problem for the random graph $G_{n,p}$, a graph on $n$ vertices in which each edge is chosen independently with probability $p$. Our main result here is that a simple augmenting approach based on ordering cycles according to a breadth-first search (Algorithm Augment-BFS described in the next section) has a strong performance guarantee.

\begin{theorem}\label{theorem:undirected-MFVS}
For $G_{n,p}$, such that $p=o(1)$, there exists a polynomial time algorithm that produces a feedback vertex set of size at most $n-(1/p)\log{(np)}(1-o(1))$ with probability at least $3/4$.
\end{theorem}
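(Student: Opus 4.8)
The goal is to show Algorithm Augment-BFS produces a feedback vertex set of size at most $n - (1/p)\log(np)(1-o(1))$ whp in $G_{n,p}$. Equivalently, we must show that the algorithm *retains* an induced forest on at least $(1/p)\log(np)(1-o(1))$ vertices, i.e., the complement of the output set is acyclic and large. Let me sketch the strategy.

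First I would set up what the BFS-augmenting process actually does. The algorithm maintains a candidate FVS $H$; the oracle returns the next unhit cycle in BFS order; the algorithm augments $H$ with a vertex of that cycle. The natural way to get a lower bound on $|V \setminus H|$ is to argue that the algorithm effectively explores $G$ via BFS and only ever needs to delete a vertex when BFS discovers a non-tree edge (a cross or back edge) closing a cycle. So the plan is: think of growing a BFS forest; the retained (acyclic) set is essentially the vertices that can be reached before we are forced to make a deletion. The number of vertices reachable in a BFS tree of a random graph before encountering cycles is controlled by the branching process / subcritical-tree heuristic — since $p = o(1)$, a BFS tree of depth $d$ has roughly $(np)^d$ vertices, and cycles start appearing once the number of explored vertices times $p$ is order 1, i.e., once we have explored about $1/p$ vertices we expect non-tree edges, but a tree on $k$ vertices stays acyclic as long as we have not closed an edge among its $k$ vertices, which by a union bound fails once $\binom{k}{2} p \gtrsim 1$... this only gives $k \approx p^{-1/2}$, so a more careful layered argument is needed to reach $p^{-1}\log(np)$.

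The key quantitative step — and I expect this to be the main obstacle — is showing that BFS from a vertex can reach $\Theta((1/p)\log(np))$ vertices while keeping the discovered subgraph a forest, with probability bounded away from $0$ (enough to get the $3/4$ bound after possibly repeating from several starting vertices or taking a union over layers). The point is that in BFS the dangerous edges are only those between a newly discovered vertex and *already discovered* vertices; edges among not-yet-discovered vertices are irrelevant. If layer $L_i$ has size $s_i$ and the total discovered so far is $t_i = \sum_{j\le i} s_j$, then layer $L_i$ contributes a forest edge unless one of its vertices sends an edge back into the already-discovered set beyond its BFS parent; the expected number of such "bad" back/cross edges is about $s_i \cdot t_i \cdot p$. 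We want to run this until $t_i$ is about $(1/p)\log(np)$. Since layers grow geometrically with ratio $\approx np$, we have $s_i \approx t_i/2$ roughly (dominated by the last layer), so the expected number of bad edges accumulated is $\sum_i s_i t_i p \approx p \sum_i t_i^2$, and with $t_{\max} \approx (1/p)\log(np)$ and geometric growth this sum is dominated by its last term, $\approx p \cdot ((1/p)\log(np))^2 = (1/p)\log^2(np)$ — which is large. So a pure "no bad edge ever" argument fails; instead one must allow a few deletions and bound their number. This is the crux: each bad edge forces at most one extra vertex into $H$, and we need to show the *total* number of deletions made by Augment-BFS over the whole run is $o((1/p)\log(np))$, or more precisely that $|H| \le n - (1/p)\log(np)(1-o(1))$, i.e., the number of *surviving* forest vertices is $(1/p)\log(np)(1-o(1))$.

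Concretely the proof plan is: (1) formalize the equivalence between the output of Augment-BFS and a BFS-forest-growing procedure with deletions; (2) analyze a single BFS exploration as a branching-process-like recursion for the layer sizes $s_i$, establishing that whp the first $i^* \approx \log_{np} (p^{-1}\log(np))$ layers have sizes within constant factors of $(np)^i$ and total $t_{i^*} = (1/p)\log(np)(1+o(1))$; (3) bound the number of non-tree edges encountered during this exploration — show it is $o(t_{i^*})$ by choosing the stopping layer carefully so that the cumulative expected bad-edge count $p\sum t_i^2$ is $o(t_{i^*})$, which forces stopping slightly *before* $t_i$ reaches $p^{-1}\log(np)$; reconcile this with the claimed bound by noting that $\log(np)$ has slack, or by restarting BFS from a fresh unexplored region to accumulate the forest in several subcritical chunks whose bad edges are independent; (4) apply concentration (Chernoff/Azuma on the exploration, or the second-moment/Chebyshev method on layer sizes) to turn "in expectation" into "with probability $\ge 3/4$"; (5) conclude $|V \setminus H| \ge (1/p)\log(np)(1-o(1))$ hence $|H| \le n - (1/p)\log(np)(1-o(1))$. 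The delicate part throughout is the interplay between how far BFS can grow and how many deletions that growth costs, and getting the constants so that the final forest size matches $(1/p)\log(np)$ up to $(1-o(1))$ rather than losing a constant factor — the matching lower bound in Section 5 (giving $(2/p)\log(np)$ for the optimum) reassures us the target is the right order and tells us we should *not* expect to beat $(1/p)\log(np)$ with this online method.
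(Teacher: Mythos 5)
There is a genuine gap, and it is exactly at the point you flag as the crux. Your accounting is right that a ``delete one endpoint per non-tree edge discovered'' scheme encountering explored mass $t$ pays of order $t^2p$ deletions, but your proposed remedies do not overcome it: requiring $p\sum_i t_i^2=o(t_{i^*})$ forces $t_{i^*}=o(1/p)$, which is not ``slightly before'' $(1/p)\log(np)$ but a full $\log(np)$ factor short, and the $(1-o(1))$ slack in the theorem cannot absorb a $\log(np)$ factor. The restart-in-fresh-regions idea fails for the same reason: whatever set $S$ you retain must induce an acyclic subgraph, hence at most $|S|-1$ edges, while the bad-edge count your mechanism must survive (including cross-chunk edges) is again of order $|S|^2p\gg |S|$ once $|S|\approx (1/p)\log(np)$. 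So the plan as written stalls at an induced forest of size $O(1/p)$, not $(1/p)\log(np)(1-o(1))$.

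The idea your proposal is missing is the paper's final-level maneuver, which is where essentially all of the $(1/p)\log(np)$ comes from. Algorithm Grow-induced-BFS runs the layered exploration only for $T=\bigl\lceil(\ln(1/16p)-\ln\ln(1/16p))/\ln(c+20\sqrt{c})\bigr\rceil$ levels, precisely so that the per-level ``delete an endpoint of each internal edge'' heuristic survives; the surviving levels $L_0,\dots,L_{T-1}$ contribute only $O(1/p)$ vertices and are not the point. Their role (Lemma \ref{lemma:largesurvivors}) is to certify that the frontier $R_T$ of \emph{unique}, still-unexposed neighbors of $L_{T-1}$ has size at least $d_\eps/p$, so that the subgraph induced on $R_T$ is a fresh random graph with $r_Tp=\Theta(np)$. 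The algorithm then adds, as the single last level of the induced tree, a greedy independent set in that subgraph: by Frieze's theorem (Theorem \ref{theorem:large-ind-set}) the independence number there is $(2/p)\log(r_Tp)(1-o(1))=(2/p)\log(np)(1-o(1))$, and the greedy procedure is a factor-$2$ approximation, yielding $(1/p)\log(np)(1-o(1))$ vertices in one shot with no further deletions (an independent set of unique neighbors attached to a tree stays a tree). Without this switch from cycle-by-cycle deletion to a bulk independent-set step on the unexposed frontier, the BFS-with-deletions analysis you outline cannot reach the claimed bound.
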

Throughout, $o(1)$ is with respect to $n$. We complement our upper bound with a lower bound on the feedback vertex set for $G_{n,p}$ obtained using simple union bound arguments.
\begin{theorem}\label{theorem:undirected-FVSlowerbound}
Let $r=\frac{2}{p}\log{(np)}(1+o(1))+1$. If $p<1/2$, then every subgraph induced by any subset of $r$ vertices in $G_{n,p}$ contains a cycle with high probability.
\end{theorem}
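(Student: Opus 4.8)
The plan is a first-moment argument: I would bound the probability that \emph{some} $r$-element set of vertices induces an acyclic subgraph (a forest) and show it is $o(1)$; the event of the theorem is the complement. The first step is to reduce to an edge count. A forest on $r$ vertices has at most $r-1$ edges, so if $S\subseteq V$ has $|S|=r$ and $G[S]$ is acyclic, then $G[S]$ has at most $r-1$ edges. The $\binom{r}{2}$ possible edges inside $S$ appear independently with probability $p$, so $|E(G[S])|$ is binomial with mean $\mu:=p\binom{r}{2}$. With $r=\frac{2}{p}\log(np)(1+o(1))+1$ we get $\mu=\frac{2}{p}(\log np)^2(1+o(1))$, which exceeds $r-1=\frac{2}{p}\log(np)(1+o(1))$ by a factor $\Theta(\log np)$, so $r-1=o(\mu)$.

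A Chernoff lower-tail bound then gives, for a fixed such $S$,
\[
\prob{G[S]\text{ is a forest}}\;\le\;\prob{\mathrm{Bin}\!\left(\tbinom{r}{2},p\right)\le r-1}\;\le\;e^{-\mu}\!\left(\frac{e\mu}{r-1}\right)^{\!r-1}\;=\;e^{-\mu(1-o(1))},
\]
the last equality using $r-1=o(\mu)$. A union bound over the $\binom{n}{r}\le(en/r)^{r}$ choices of $S$ gives
\[
\prob{\text{some }r\text{-subset of }V\text{ induces a forest}}\;\le\;\binom{n}{r}\,e^{-\mu(1-o(1))},
\]
and the theorem follows if this is $o(1)$.

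To see that it is, I would check that $\mu$ dominates $\log\binom{n}{r}\le r\log(en/r)$. Plugging in $r\approx\frac{2}{p}\log(np)$ gives $\log(en/r)=\log(np)-\log\log(np)+O(1)$, hence $\log\binom{n}{r}\le\frac{2}{p}(\log np)^2(1+o(1))$ --- the \emph{same} leading order as $\mu$, which is no accident: $\frac{2}{p}\log(np)$ is essentially the threshold above which $G_{n,p}$ has no induced forest (indeed no independent set) of that size. One therefore has to compare lower-order terms. The $-\log\log(np)$ in $\log(en/r)$ (which shrinks $\binom{n}{r}$) cancels against the $(e\mu/(r-1))^{r-1}$ factor in the Chernoff bound (which inflates it), and what is left is governed by the $(1+o(1))$ slack in $r$: writing $r-1=\frac{2}{p}\log(np)(1+\eta)$ with $\eta\to0$ but $\eta\log(np)\to\infty$ (e.g.\ $\eta=1/\sqrt{\log(np)}$), one gets $\log\binom{n}{r}-\mu\le-\tfrac{2\eta}{p}(\log np)^{2}(1-o(1))\to-\infty$, so the union bound is $e^{-\Omega(\eta(\log np)^2/p)}=o(1)$.

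The hard part is exactly this last step. Because $r-1$ sits essentially at the induced-forest threshold, the statement is false without the $(1+o(1))$ factor, so one cannot be cavalier: one must track the $\log\log(np)$-order contributions --- from $\log(en/r)$ in $\binom{n}{r}$, and from $(e\mu/(r-1))^{r-1}$ in the Chernoff bound --- carefully enough to see that they cancel and that only the deliberately inserted slack determines the sign. The argument also tacitly requires $np\to\infty$ (both so that $r-1=o(\mu)$ and so that such an $\eta$ exists), which I would take to be part of the intended reading of the hypothesis $p<1/2$.
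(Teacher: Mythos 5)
Your proof is correct, but it takes a genuinely different route from the paper's. The paper bounds the probability that a fixed $r$-set induces a forest by enumerating forests directly: it sums over the number of components $k$ and the partition of $r$ into tree sizes, uses Cayley's formula $n_i^{n_i-2}$ for each tree, multiplies by the exact probability $p^{r-k}(1-p)^{\binom{r}{2}-r+k}$ of that configuration, and controls the sum with bounds on the integer partition function; the union bound over vertex subsets is absorbed implicitly in the step $r!\le r^r\le n^r$. You instead discard all forest structure, keeping only the relaxation ``acyclic $\Rightarrow$ at most $r-1$ edges,'' bound the lower tail of $\mathrm{Bin}\bigl(\binom{r}{2},p\bigr)$ by a Chernoff-type estimate, and do the union bound over $\binom{n}{r}$ subsets explicitly. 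Your route is more elementary (no Cayley's formula, no partition asymptotics) and, as you note, does not actually use $p<1/2$ (the paper needs it only to bound $p/(1-p)\le 2p$); the paper's exact forest count is in principle sharper, but both arguments give the same $e^{-p\binom{r}{2}(1-o(1))}$ per subset, so nothing is lost by your relaxation. Crucially, you correctly identify and handle the delicate point that both approaches share: since $r$ sits at the induced-forest threshold, $\log\binom{n}{r}$ and $\mu=p\binom{r}{2}$ agree to leading order $\tfrac{2}{p}(\log np)^2$, the $\log\log(np)$-order terms from $\log(en/r)$ and from $(e\mu/(r-1))^{r-1}$ cancel, and the sign of the exponent is decided by the slack $\eta$ in $r-1=\tfrac{2}{p}\log(np)(1+\eta)$, giving an overall bound $e^{-r(\eta\log(np)-O(1))}\to 0$ once $\eta\log(np)\to\infty$ (the paper's computation likewise reduces to the exponent $-r\bigl(\tfrac{pr}{2}-\log(2e^{C_2}np)-o(1)\bigr)$, i.e.\ essentially the same condition). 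Your observation that $np\to\infty$ is tacitly required matches the paper's implicit assumption. No gap.
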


This gives an upper bound of $r-1$ on the maximum induced acyclic subgraph of $G_{n,p}$. So, the size of the minimum feedback vertex set for $G_{n,p}$ is at least $n-r+1=n-(2/p)\log{np}$. A result of Fernandez de la Vega \cite{delavega96} shows that $G_{n,p}$ has an induced tree of size at least $(2/p)\log{np}(1-o(1))$, when $p=o(1)$. This gives the best possible existential result: there exists a feedback vertex set of size at most $n-(2/p)\log{np}(1-o(1))$ with high probability in $G_{n,p}$, when $p=o(1)$. We note that this result is not algorithmic; Fernandez de la Vega gives a greedy algorithm to obtain the largest induced tree of size $(1/p)\log{np}(1-o(1))$ in \cite{delavega86}. This algorithm is based on growing the induced forest from the highest labeled vertex and does not fall in the implicit hitting set framework (when the graph is revealed as a set of cycles). In contrast, our main contribution to the FVS problem in random graphs is showing that a simple breadth-first ordering of the cycles is sufficient to find a nearly optimal feedback vertex set. We also note that our algorithm is an online algorithm with good performance guarantee when the cycles are revealed according to a breadth-first ordering. Improving on the size of the FVS returned by our algorithm appears to require making progress on the long-standing open problem of finding an independent set of size $((1+\eps)/p)\log {np}$ in $G_{n,p}$. Assuming an optimal algorithm for this problem leads to an asymptotically optimal guarantee matching Fernandez de la Vega's existential bound.

Next, we turn our attention to the directed random graph $D_{n,p}$ on $n$ vertices. The directed random graph $D_{n,p}$ is obtained as follows: choose a set of undirected edges joining distinct elements of $V$ independently with probability $2p$. For each chosen undirected edge $\{u,v\}$, orient it in one of the two directions $\{u\rightarrow v, v\rightarrow u\}$ in $D_{n,p}$ with equal probability.

The undirected graph $G_D$ obtained by ignoring the orientation of the edges in $D_{n,p}$ is the random graph $G(n,2p)$. Moreover, a feedback vertex set in $G_D$ is also a feedback vertex set for $D_{n,p}$. Therefore, by ignoring the orientation of the arcs, the Augment-BFS algorithm as applied to undirected graphs can be used to obtain a feedback vertex set of size at most $n-(1/2p)\log{(2np)}$ with probability at least $3/4$. A theorem of Spencer and Subramanian \cite{spencer-subramanian} gives a nearly matching lower bound on the size of the feedback vertex set in $D_{n,p}$.
\begin{theorem}\cite{spencer-subramanian}\label{theorem:directed-FVSlowerbound}
Consider the random graph $D_{n,p}$, where $np\geq W$, for some fixed constant $W$. Let $r=(2/\log{(1-p)^{-1}})(\log{(np)}+3e)$. Every subgraph induced by any subset of $r$ vertices in $G$ contains a cycle with high probability.
\end{theorem}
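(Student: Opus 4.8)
The plan is to establish the contrapositive by the first moment method. Let $X$ be the number of $r$-element subsets $S\subseteq V$ such that the induced subdigraph $D_{n,p}[S]$ is acyclic; the theorem is equivalent to $X=0$ with high probability, and by Markov's inequality it suffices to show $\E[X]\to 0$. Since $X$ is a sum of $\binom n r$ indicators, each depending only on the arcs inside the corresponding set, we have $\E[X]=\binom n r\, q_r$, where $q_r:=\prob{D_{n,p}[S]\text{ is acyclic}}$ for a fixed $r$-set $S$; note that $D_{n,p}[S]$ is itself distributed as $D_{r,p}$, so the whole problem reduces to a sufficiently sharp estimate of $\prob{D_{r,p}\text{ is acyclic}}$.

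First I would record the easy bound. A digraph is acyclic iff its vertices admit a topological ordering; for any fixed ordering $\pi$ of $S$, the probability that $\pi$ is a valid topological order of $D_{n,p}[S]$ is $(1-p)^{\binom r 2}$, since for each of the $\binom r 2$ pairs the backward arc --- present with probability $p$ in $D_{n,p}$, because the underlying undirected edge appears with probability $2p$ and is then oriented either way --- must be absent, and these events are independent across pairs. A union bound over the $r!$ orderings gives $q_r\le r!\,(1-p)^{\binom r2}$, hence $\E[X]\le \binom n r\, r!\,(1-p)^{\binom r2}\le n^r(1-p)^{\binom r2}$, which tends to $0$ as soon as $r>1+2\log n/\log((1-p)^{-1})$. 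This is already a statement of the desired type, but with $\log n$ in place of $\log(np)$; for $p=o(1)$ the claimed $r$ is smaller by a constant factor, so the union bound over orderings is too lossy and must be refined.

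The refinement must exploit the fact that the union bound above charges each acyclic digraph once for every one of its topological orders, while a ``typical'' acyclic subdigraph on $r$ vertices --- when $r$ is around $2\log(np)/\log((1-p)^{-1})$, so that $rp\gg 1$ --- has roughly $\binom r2 p\gg r$ arcs and hence very few topological orders, making the overcount enormous. Concretely, I would instead estimate $q_r$ through the arc-count decomposition $q_r=\sum_{m}N(r,m)\,p^{m}(1-2p)^{\binom r2-m}$, where $N(r,m)$ is the number of acyclic digraphs on $r$ labeled vertices with exactly $m$ arcs, i.e.\ $\binom{\binom r2}{m}$ times the average, over $m$-edge graphs on $[r]$, of the number of acyclic orientations. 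The terms with $m\approx\binom r2 p$ dominate, and on this range one needs an upper bound on $N(r,m)$ that is sharp enough to be summable yet still captures the scarcity of dense acyclic digraphs; balancing the entropy of these dense structures against the probability factor $p^{m}$ is precisely what turns $\log n$ into $\log(np)$ and produces the explicit additive constant $3e$. This balancing estimate for $N(r,m)$ --- tight near $m\approx\binom r2 p$ and controllable over all $m$ --- is the main obstacle, and is where the substance of the Spencer--Subramanian argument resides; once a bound of the form $q_r\le (\text{base})^{r}$ with the correct base is in hand, substituting it into $\E[X]=\binom n r\,q_r\le (en/r)^{r}q_r$ and checking $\E[X]\to 0$ for $r=(2/\log((1-p)^{-1}))(\log(np)+3e)$ is a direct computation.
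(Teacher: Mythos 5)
This statement is quoted from Spencer--Subramanian; the paper does not prove it, remarking only that their proof is a counting argument over vertex orderings of an acyclic digraph, analogous to the forest-counting proof of Theorem~\ref{theorem:undirected-FVSlowerbound}. Measured against what the theorem actually claims, your proposal has a genuine gap: everything you prove rigorously is the weaker bound with $\log n$ in place of $\log (np)$. The setup is fine --- the reduction to $\E[X]=\binom{n}{r}q_r$, the observation that for a fixed ordering each backward arc is absent with probability $1-p$ independently (so $q_r\le r!\,(1-p)^{\binom r2}$), and the diagnosis that the $r!$ overcount over topological orders is exactly what loses the factor $p$ inside the logarithm. But the refinement you then describe is a program, not an argument: you write $q_r=\sum_m N(r,m)p^m(1-2p)^{\binom r2-m}$ (correct for this model) and assert that a suitable bound on $N(r,m)$ near $m\approx\binom r2 p$ ``is where the substance of the Spencer--Subramanian argument resides,'' without supplying any such bound. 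That estimate is the entire theorem. Quantitatively, to make $\binom nr q_r\le (en/r)^r q_r$ vanish at $r=\frac{2}{\log(1-p)^{-1}}(\log(np)+3e)$ you need something like $q_r\le (1-p)^{\binom r2}(C\,rp)^{r}$, i.e.\ you must show that the effective number of orderings charged to a typical dense acyclic digraph is about $(rp)^r\approx(2\log np)^r$ rather than $r!\approx (r/e)^r$ --- a savings of $p^{-r}$. Nothing in the proposal establishes this, and without it the stated value of $r$, and in particular the additive constant $3e$, cannot be verified.

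Two smaller points. First, the gap between $\log n$ and $\log(np)$ is not always ``a constant factor'': the theorem only assumes $np\ge W$, and when $np=\Theta(1)$ the claimed $r$ is $\Theta(1/\log(1-p)^{-1})$ while your union bound needs $r=\Omega(\log n/\log(1-p)^{-1})$, so the loss is unbounded in that regime. Second, if you want to complete the argument along the lines the paper hints at, the analogue of its undirected proof would be to count acyclic digraphs by structural data that forces many arcs to be present (as the forest count forces $r-k$ edges, each contributing a factor $p$), rather than bounding $N(r,m)$ abstractly; either way, the decisive counting lemma still has to be proved, and at present your write-up defers it entirely to the cited paper.
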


It is evident from the results above that the feedback vertex set in a random graph contains most of its vertices when $p=o(1)$. This motivates us to ask if a significantly smaller ``planted'' feedback vertex set in a random graph can be recovered with the implicit hitting set framework. In order to address this question, we present the following planted model.

The planted directed random graph ${D}_{n,\delta,p}$ on $n$ vertices for $0<\delta\leq 1$ is obtained as follows: Choose $\delta n$ vertices arbitrarily to be the planted subset $P$. Each pair $(u,v)$ where $u\in P, v\in V$, is adjacent independently with probability $2p$ and the corresponding edge is oriented in one of the two directions $\{u\rightarrow v, v\rightarrow u\}$ in $D_{n,\delta,p}$ with equal probability. The arcs between vertices in $V\setminus P$ are obtained in the following manner to ensure that the subgraph induced on $V\setminus P$ is a DAG: Pick an arbitrary permutation of the vertices in $V\setminus P$. With the vertices ordered according to this permutation, each forward arc is present with probability $p$ independently; no backward arcs occur according to this ordering.

\begin{figure}[H]
\label{fig:planted-model}
\begin{center}
\psfig{file=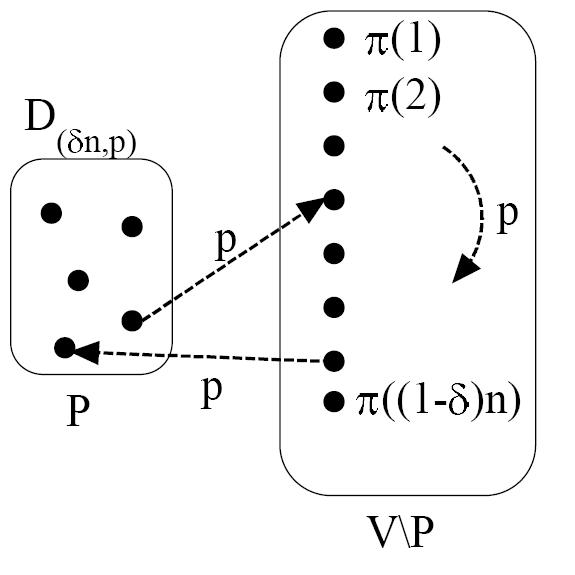,width=1.6in,height=1.6in}
\end{center}
\caption{Planted Model}
\end{figure}

We prove that for graphs $D_{n,\delta,p}$, for large enough $p$, it is sufficient to hit cycles of small size to recover the planted feedback vertex set. For example, if $p\geq C_0/n^{1/3}$ for some absolute constant $C_0$, then it is sufficient to find the best hitting set for triangles in $D_{n,\delta,p}$. This would be the planted feedback vertex set. We state the theorem for cycles of length $k$.
\begin{theorem}\label{theorem:planted-directedFVS}
Let $D$ be a planted directed random graph $D_{n,\delta,p}$ with planted feedback vertex set $P$, where $p\geq C/n^{1-2/k}$ for some constant $C$ and $0<\delta\leq 9/19$. Then, with high probability, the smallest hitting set for the set of cycles of size $k$ in $D$ is the planted feedback vertex set $P$.
\end{theorem}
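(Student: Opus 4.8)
The plan is to show two things: first, that $P$ is a hitting set for all $k$-cycles in $D$ (easy, since the subgraph induced on $V \setminus P$ is a DAG and hence contains no cycles of any length), and second, and more substantially, that no set $Q$ with $|Q| \le |P| = \delta n$ and $Q \ne P$ can hit every $k$-cycle. The strategy for the second part is a union bound over all ``bad'' candidate sets $Q$: for each such $Q$, exhibit that with overwhelming probability there is a $k$-cycle disjoint from $Q$. Write $A = P \setminus Q$ (vertices planted but missed by $Q$) and $B = Q \setminus P$ (vertices in $Q$ but outside $P$); since $|Q| \le |P|$ we have $|A| \ge |B|$, and if $Q \ne P$ then $A \ne \emptyset$. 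Fix $a = |A|$; the number of choices of such a pair $(A,B)$ is at most $\binom{\delta n}{a}\binom{n}{a} \le (en/a)^{2a}$ roughly, and we must beat this by the probability that \emph{every} $k$-cycle meets $A \cup B$.

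The heart of the argument is a lower bound on the number of \emph{independent} potential $k$-cycles avoiding $Q$ — that is, avoiding $A \cup B$ — so that the events ``this cycle is present'' are independent and we can multiply failure probabilities. I would look at $k$-cycles that use exactly one vertex $v$ from $A$ (to force them to be unhit: such a cycle meets $P$ only in $v \notin Q$, and avoids $B$) and $k-1$ vertices from $V \setminus (P \cup B)$. The DAG structure on $V\setminus P$ actually helps here: fixing an ordering-respecting path on $k-1$ DAG vertices and routing it through $v \in A$, the probability that all $k$ relevant arcs are present and correctly oriented is of order $(p/2)^{k-1}\cdot p^{2}$ or similar — crucially $\Theta(p^k)$ up to constants depending on $k$. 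The number of vertex-disjoint such cycles one can pack, using a single fixed $v \in A$ and disjoint $(k-1)$-subsets from the remaining $\ge n(1-\delta)-|B| = \Omega(n)$ DAG vertices, is $\Omega(n/k)$. Hence the probability that all of them are absent is at most $(1 - c\,p^k)^{\Omega(n/k)} \le \exp(-\Omega(n p^k / k))$.

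Now plug in $p \ge C/n^{1-2/k}$, so $p^k \ge C^k / n^{k-2}$ and $n p^k \ge C^k n^{3-k}$ — wait, that decays for $k \ge 4$, so I actually want to be more careful: the right quantity is the \emph{expected number} of disjoint unhit $k$-cycles through a fixed $v$, which is $\Omega\!\big(\text{(number of ordered $(k-1)$-tuples)} \cdot p^k\big) = \Omega\big(n^{k-1} p^k\big)$, and with $p \ge C n^{-(1-2/k)}$ this is $\Omega(n^{k-1} \cdot n^{-(k-2)}) = \Omega(n)$ — and since these cycles can be chosen to be edge-independent in the relevant arc slots, the probability they are all absent is $\exp(-\Omega(n))$ (the exponent being at least $C^k/e$ times a linear factor, absorbing the constant). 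Taking a union bound over all $(A,B)$ with $|A| = a$ costs a factor $(en/a)^{2a} \le e^{O(a \log n)} \le e^{O(n \log n)}$ — so this is the crux: $\exp(-\Omega(n))$ is \emph{not} enough to beat $\exp(O(n \log n))$. The resolution, and the reason the hypothesis $\delta \le 9/19$ appears, is that one gets a failure probability per $Q$ of roughly $\exp(-\Omega(a \cdot n p^k / \text{something}))$ that scales with $a = |A|$ — each of the $a$ vertices in $A$ independently gives its own family of $\Omega(n^{k-1}p^k / a)$ disjoint unhit cycles (partition the $\Omega(n)$ free DAG vertices into $a$ groups) — giving failure probability $\exp(-\Omega(n p^k \cdot (n^{k-2}/a)))$... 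I would instead track it as: expected disjoint unhit cycles $\gtrsim a \cdot (n/a)^{k-1} p^k$ when spreading $a$ source vertices over $n$ sink vertices, and the $\binom{\delta n}{a}\binom{n}{a}$ union-bound term is $\le (\delta e n / a)^a (en/a)^a$; requiring the product over the $\exp(-a(n/a)^{k-1}p^k)$-type bound to win forces $(n/a)^{k-1}p^k \ge \Omega(\log(n/a))$, which holds for all $a \le \delta n$ exactly when $\delta$ is a small enough constant, and a careful accounting of the constants (the number of orientations, the $1/k!$ from unordered tuples, the $e$ from the binomial bound) yields the explicit threshold $9/19$.

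\medskip
\noindent The main obstacle, then, is the union-bound accounting: producing, for a candidate wrong hitting set $Q$ with $|P \setminus Q| = a$, a lower bound of the form $\exp(-\Omega(a (n/a)^{k-1} p^k))$ on the probability it fails to be hit, with constants sharp enough that summing $(\delta n \cdot n / a^2)^{a}$ copies over all $a$ from $1$ to $\delta n$ still gives $o(1)$; this is where the precise constant $9/19$ must be extracted and where the edge-independence of the chosen cycle family (needed to multiply the ``cycle absent'' probabilities honestly) has to be verified against the DAG-orientation constraints.
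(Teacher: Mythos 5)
The gap is exactly where you flag it, and it cannot be closed in the form you propose. To honestly multiply ``this cycle is absent'' probabilities you need pairwise arc-disjoint potential cycles, and through a set $A=P\setminus Q$ with $|A|=a$ there are at most $O(an)$ pairwise arc-disjoint $k$-cycles (each must use a distinct arc leaving $A$), each present with probability $\Theta(p^k)$. At the threshold $p=Cn^{-(1-2/k)}$ this gives a per-$Q$ failure bound no better than $\exp\bigl(-O(a\,np^k)\bigr)=\exp\bigl(-O(aC^kn^{3-k})\bigr)$, which is $\exp(-o(a))$ for every $k\ge 4$ and only $\exp(-O(a))$ for $k=3$ --- hopeless against the $\binom{\delta n}{a}\binom{n}{a}=\exp(\Theta(a\log n))$ candidate sets. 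Your claimed $\exp(-\Omega(n))$, and more generally $\exp\bigl(-\Omega(a(n/a)^{k-1}p^k)\bigr)$, silently converts the \emph{expected number} of unhit cycles ($\Theta(an^{k-1}p^k)=\Theta(an)$) into an exponent, but that expectation is carried by heavily overlapping cycles; turning it into an exponentially small probability requires a correlation inequality (Janson/Suen), not independence, and even Janson caps the exponent at roughly $\mu^2/\Delta=O((np)^2)=O(n^{4/k})$, which still loses to the union bound once $a$ is comparable to $\delta n$. Your guess that $9/19$ would emerge from tightening these constants is also not where it comes from.

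The paper takes a different, much lighter route that avoids any union bound over candidate sets $Q$. Lemma \ref{lemma:cycle-through-every-vertex}, proved by the second moment method (and giving only a polynomially small failure probability, which is all that method can give), states that for a subset $S\subseteq V\setminus P$ of size at least $(1-\delta)n/10$, with high probability every $v\in P$ lies on a $k$-cycle inside $S\cup\{v\}$. Everything else is deterministic counting (Lemmas \ref{lemma:small-hitting-set-contains-P} and \ref{lemma:lower-bound-HS}): if a hitting set $H$ for the $k$-cycles misses some $u\in P$, then $H$ must contain more than $9|V\setminus P|/10=9(1-\delta)n/10$ vertices, and the hypothesis $\delta\le 9/19$ is precisely the inequality $9(1-\delta)/10\ge\delta$, so $|H|>\delta n=|P|$; hence the minimum hitting set has size exactly $\delta n$, must contain $P$, and therefore equals $P$. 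That is the actual source of $9/19$. If you wish to salvage your approach you would have to (i) replace the independence step by Janson-type bounds and (ii) invent a separate mechanism for large $a$; the paper's structural observation --- missing even one planted vertex forces the hitting set to swallow $90\%$ of the DAG part --- is what makes all of that machinery unnecessary.
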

Thus, in order to recover the planted feedback vertex set, it is sufficient to obtain cycles in increasing order of their sizes and find the best hitting set for the subset of all cycles of size $k$. Moreover, the expected number of cycles of length $k$ is at most $(nkp)^{k}=\text{poly}(n)$ for the mentioned range of $p$ and constant $k$. Thus, we have a polynomial-sized collection $\T'$ of cycles, such that the optimal hitting set for $\T'$ is also the optimal hitting set for all cycles in $D_{n,\delta,p}$.

However, finding the smallest hitting set is NP-hard even for triangles. We give an efficient algorithm to recover the planted feedback vertex set using an \emph{approximate} hitting set for the small cycles.
\begin{theorem}\label{theorem:algorithm-planted-directedFVS}
Let $D$ be a planted directed random graph $D_{n,\delta,p}$ with planted feedback vertex set $P$, where $p\geq C/n^{1-2/k}$ for some constant $C$ and $k\geq 3$, $0<\delta\leq 1/2k$. Then, there exists an algorithm to recover the planted feedback vertex set $P$ with high probability; this algorithm has an expected running time of $(nkp)^{O(k)}$.
\end{theorem}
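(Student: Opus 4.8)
The plan is to enumerate all cycles of length $k$, compute an approximately minimum hitting set $H$ for this family, and then extract the planted set $P$ from $H$. Since $D-P$ induces a DAG, every cycle of $D$ --- and in particular every $k$-cycle --- contains a vertex of $P$, so $P$ is a hitting set for the $k$-cycles; by Theorem~\ref{theorem:planted-directedFVS} it is in fact the unique minimum one (the present hypothesis $\delta\le 1/(2k)$ is stronger than the $\delta\le 9/19$ used there), but the algorithm will not try to compute a minimum hitting set, since that is NP-hard. The number of $k$-cycles is at most $(nkp)^{k}$ in expectation and at most $n^{k}$ always, so they can be listed in time $(nkp)^{O(k)}$. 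On the resulting $k$-uniform hypergraph run any standard $k$-approximation for minimum vertex cover (e.g., LP relaxation plus rounding, taking each vertex of LP value $\ge 1/k$; or greedily throwing in all $k$ vertices of an uncovered $k$-cycle). Because $P$ is a feasible integral solution of value $\delta n$, this yields a hitting set $H$ with $|H|\le k\delta n\le n/2$.

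The purification step rests on one probabilistic fact, which I would isolate as a lemma: with high probability, for every $v\in P$ and every $S\subseteq V\setminus P$ with $|S|\le k\delta n$, there is a $k$-cycle of $D$ through $v$ whose other $k-1$ vertices all lie in $(V\setminus P)\setminus S$. Granting it, first I claim $P\subseteq H$: if some $v\in P$ were not in $H$, then the lemma with $S=H\setminus P$ gives a $k$-cycle $C$ with $v$ and the rest of $C$ in $V\setminus(P\cup H)$, so $C\subseteq V\setminus H$, contradicting that $H$ is a hitting set. Given $P\subseteq H$, the algorithm outputs
\[
\widehat P=\{\,v\in H:\ \text{some enumerated $k$-cycle $C$ satisfies }C\cap H=\{v\}\,\}.
\]
If $v\in P$, the lemma with $S=H\setminus P$ produces a $k$-cycle through $v$ with every other vertex in $V\setminus H$, so $C\cap H=\{v\}$ and $v\in\widehat P$. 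If instead $v\in H\setminus P$, then any $C$ with $C\cap H=\{v\}$ has all its vertices in $(V\setminus H)\cup\{v\}\subseteq V\setminus P$ (using $P\subseteq H$), which is impossible since $D-P$ is acyclic; hence $v\notin\widehat P$. Therefore $\widehat P=P$ with high probability, and the cost is dominated by listing and scanning the $k$-cycles and by the approximation step, all $(nkp)^{O(k)}$ in expectation.

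The hard part is the lemma, and specifically the strength of concentration it needs. Fix $v\in P$ and $S$ as above; then $W=(V\setminus P)\setminus S$ has $\ge(1-(k+1)\delta)n=\Omega(n)$ vertices, each an in- or out-neighbour of $v$ independently with probability $p$, and the arcs inside $W$ form a random DAG with forward-arc probability $p$. A first-moment calculation gives $\binom{|W|}{k-1}p^{k}=\Theta(C^{k}n)$ expected $k$-cycles through $v$ using only vertices of $W\cup\{v\}$, since $p\ge C/n^{1-2/k}$ makes $p^{k}=\Theta(C^{k}/n^{k-2})$. To get that at least one such cycle exists for \emph{every} exclusion set $S$ (the set $H\setminus P$ is adversarial from the lemma's point of view), I would union-bound over the $2^{O(n)}$ choices of $S$, which demands a lower-tail estimate of the form $\Pr[\text{count}=0]\le e^{-\Omega(C^{k}n)}$ for the $k$-cycle count in the random graph on $W\cup\{v\}$. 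Establishing this is the delicate part: I would invoke Janson's inequality (or Kim--Vu polynomial concentration) for counts of copies of a fixed small graph, verifying that in this density window the clustering term $\bar\Delta$ coming from pairs of $k$-cycles sharing an edge is $O(\E[\text{count}])$, so that the exponent is linear in $C^{k}n$; taking the absolute constant $C$ large enough then swamps the $2^{O(n)}$ entropy, the $\text{poly}(n)$ overhead, and the outer union over $v\in P$. This is essentially the estimate already behind Theorem~\ref{theorem:planted-directedFVS}, but now needed uniformly over exclusion sets rather than for the single set $P$. A last, routine point is to pass from ``$(nkp)^{O(k)}$ in expectation'' on the number of $k$-cycles to an expected running-time bound, which is immediate since that number never exceeds $n^{k}$ and concentrates.
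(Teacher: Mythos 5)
Your algorithm is the paper's Recover-Planted-FVS in all essentials: enumerate the $k$-cycles, build a $k$-approximate hitting set $H$ greedily (or by LP rounding), use $|H|\le k\delta n$ together with $\delta\le 1/2k$ to force $P\subseteq H$, and then purify by keeping exactly those $v\in H$ that still lie on a $k$-cycle whose other vertices avoid $H$; the false-positive exclusion via acyclicity of $V\setminus P$ is also the paper's argument. Where you genuinely diverge is in the key probabilistic lemma. The paper's Lemma~\ref{lemma:cycle-through-every-vertex} is proved for a \emph{fixed} subset $S\subseteq V\setminus P$ of size at least $(1-\delta)n/10$ by a second-moment (Chebyshev) argument, yielding only an inverse-polynomial failure bound per vertex, and it is then invoked for the set $V\setminus S$ produced by the algorithm; you instead demand the statement uniformly over all exclusion sets of size at most $k\delta n$, and, to afford the $2^{O(n)}$ union bound, you replace Chebyshev by Janson's inequality to get a lower-tail bound of the form $e^{-\Omega(C^k n)}$ for the count of $k$-cycles through a fixed $v$. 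This is a heavier tool, but it buys exactly the uniformity that is strictly needed, since the hitting set $H$ (hence the exclusion set $H\setminus P$) depends on the same randomness as the cycles; the paper's lighter route quietly applies its fixed-$S$ lemma to a graph-dependent set. Your deferred verification of the clustering term is routine here: at the threshold $p=\Theta(n^{-(1-2/k)})$ the dominant overlap term gives $\bar\Delta/\mu = O(C^{k-1}n^{-2/k})$, and for larger $p$ the generalized Janson bound still gives exponent $\min(\mu,\mu^2/\bar\Delta)=\Omega(n^{1+2/k}p)\gg n$, so taking $C$ large enough beats the entropy of $S$. One small caveat on the running time: ``at most $n^k$ cycles, so they can be listed in $(nkp)^{O(k)}$ time'' is not literally right if you enumerate all $k$-tuples (that costs $n^k=(np)^{\Theta(k^2)}$ at the threshold); you should use an output-sensitive cycle enumeration with polynomial delay, so the expected cost is $(nkp)^k\cdot\mathrm{poly}(n)=(nkp)^{O(k)}$, which is the same level of detail the paper gives.
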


\section{Algorithms}
In this section, we mention a generic algorithm for implicit hitting set problems. We then focus on specializing this algorithm to the feedback vertex set problems in directed and undirected graphs.

\subsection{A Generic Algorithm}
We mention a generic algorithm for solving instances of the implicit hitting
set problem optimally with the aid of an oracle and a subroutine for the exact solution of (explicit) hitting set problems. The guiding principle is to build up a short list of important subsets that dictate the solution, while limiting the number of times the subroutine is invoked, since its computational cost is
high.

A set $H \subset U$ is called {\it feasible} if it is a hitting set for the
implicit hitting set problem, and {\it optimal} if it is feasible and of
minimum cardinality among all feasible  hitting sets. Whenever the  oracle reveals
that a set $H$ is not feasible, it returns  $c(H)$, a subset that H does not
hit. Each generated subset $c(H)$ is added to a growing list
$\Gamma$ of subsets. A set $H$ is called $\Gamma$-feasible if it hits every
subset in $\Gamma$ and $\Gamma$-optimal if it is $\Gamma$-feasible and of
minimum cardinality among all $\Gamma$-feasible subsets. If a $\Gamma$-optimal set $K$
is feasible then it is necessarily optimal since $K$ is a valid hitting set for the implicit hitting set problem which contains subsets in $\Gamma$, and $K$ is the minimum hitting set for subsets in $\Gamma$. Thus the goal of the algorithm is to construct a feasible $\Gamma$-optimal set. 

\noindent {\bf Generic Algorithm}\\
\noindent Initialize $\Gamma \leftarrow \emptyset$.
\begin{enumerate}
\item Repeat:
\begin{enumerate}
\item $H\leftarrow U$.
\item Repeat while there exists a $\Gamma$-feasible set $H' = (H \cup X) - Y$ such that $X,Y\subseteq U$, $|X|<|Y|$:
	\begin{enumerate}
	\item If $H'$ is feasible then $H\leftarrow H'$; else $\Gamma \leftarrow \Gamma \cup \{c(H')\}$.
	\end{enumerate}
\item Construct a $\Gamma$-optimal set $K$.
\item If $|H|=|K|$ then return $H$ and halt ($H$ is optimal); if $K$ is feasible then return $K$ and halt ($K$ is optimal); else $\Gamma \leftarrow \Gamma \cup \{c(K)\}$.
\end{enumerate}
\end{enumerate}

\noindent {\bf Remark 1}. Since the generic algorithm solves optimally an NP-hard problem as a subroutine, its worst-case execution time is exponential in $|U|$. Its effectiveness in practice depends on the choice of the missed subset that the oracle returns.

A companion paper \cite{km} describes successful computational experience with
an algorithm that formulates a multi-genome alignment problem as an implicit
hitting set problem, and solves it using a specially tailored variant of the 
generic algorithm.

\begin{center}
\fbox{\parbox{6.7in}{
\begin{minipage}{6.5in}
\begin{tt}

\noindent {\bf Algorithm Augment-BFS}
\begin{enumerate}
\item Start from an arbitrary vertex as a surviving vertex. Initialize i=1.
\item Repeat:
\begin{enumerate}
\item Obtain cycles induced by one step BFS-exploration of the surviving vertices at depth i. Delete vertices at depth i+1 that are present in these cycles. Declare the remaining vertices at depth i+1 as surviving vertices.
\item If no vertices at depth i+1 are surviving vertices, terminate and output the set of all deleted vertices.
\item i=i+1.
\end{enumerate}
\end{enumerate}

\end{tt}
\end{minipage}
}}
\end{center}

\subsection{Algorithm Augment-BFS}
In this section, we give an algorithm to find the feedback vertex set in both undirected and directed graphs. Here, we use an oracle that returns cycles according to a breadth-first search ordering. Instead of the exact algorithm for the (explicit) hitting set problem, as suggested in the generic algorithm, we use a simpler strategy of picking a vertex from each missed cycle. Essentially, the algorithm considers cycles according to a breadth-first search ordering and maintains an induced tree on a set of vertices denoted as surviving vertices. The vertices deleted in the process will constitute a feedback vertex set. Having built an induced tree on surviving vertices up to a certain depth $i$, the algorithm is presented with cycles obtained by a one-step BFS exploration of the surviving vertices at depth $i$. For each such cycle, the algorithm picks a vertex at depth $i+1$ to delete. The vertices at depth $i+1$ that are not deleted are added to the set of surviving vertices, thereby leading to an induced tree on surviving vertices up to depth $i+1$.

\noindent {\bf Remark 2}. Although a very similar algorithm can be used for other variants of the feedback set problem, we note that these problems in random graphs turn out to be easy. For example, the feedback edge set problem is equivalent to the maximum spanning tree problem, while the feedback arc set problem has tight bounds for random graphs using very simple algorithms.

\section{Feedback Vertex Set in Random Graphs}
In this section, we show that Augment-BFS can be used to find a nearly optimal feedback vertex set in the undirected random graph $G_{n,p}$. Our main contribution is a rigorous analysis of the heuristic of simple cycle elimination in BFS order. We say that a vertex $v$ is a {\bf \emph{unique}} neighbor of a subset of vertices $L$ if and only if $v$ is adjacent to exactly one vertex in $L$.

In Algorithm Augment-BFS, we obtain induced cycles in BFS order having deleted the vertices from the current candidate FVS $S$. We refine the candidate
FVS $S$ precisely as follows to obtain an induced BFS tree with unit increase in height: Consider the set $c(S)$ of cycles obtained by one-step BFS exploration from the set of vertices at current depth. Let $K$ denote the set of unexplored vertices in the cycles in $c(S)$ ($K$ is a subset of the vertices obtained by one-step BFS exploration from the set of vertices at current depth). Among the vertices in $K$ include all non-unique neighbors of the set of vertices at current depth into $S$. Find a large independent set in the subgraph induced by the unique neighbors $R\subseteq K$ of the set of vertices at current depth. Include all vertices in $R$ that are not in the independent set into $S$. This iterative refinement process is a natural adaptation of the idea behind the generic algorithm to the feedback vertex set problem where one collects a subset of cycles to find a hitting set $H$ for these cycles and proposes $H$ as the candidate set to obtain more cycles that have not been hit.

\begin{figure}[H]
\label{fig:bfs-exploration}
\begin{center}
\psfig{file=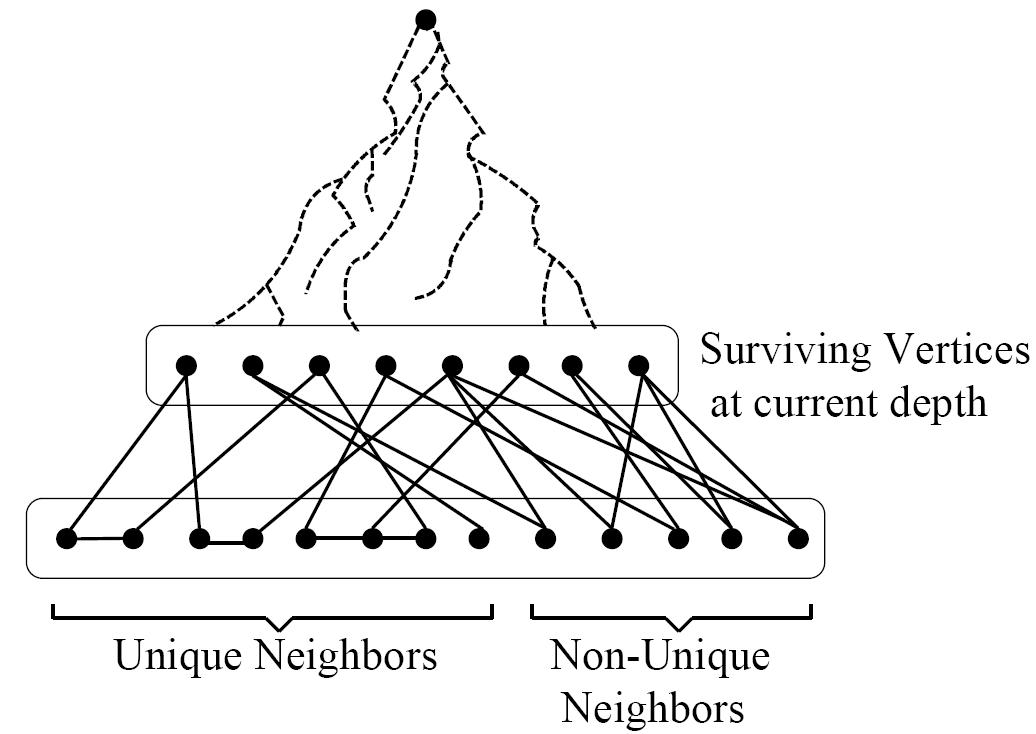,width=3in,height=2.2in}
\end{center}
\caption{BFS Exploration}
\end{figure}

Essentially, the algorithm maintains an induced BFS tree by deleting vertices to remove cycles. The set of deleted vertices form a FVS. Consequently at each level of the BFS exploration, one would prefer to add as many vertices from the next level $K$ as possible maintaining the acyclic property. One way to do this is as follows: Delete all the non-unique neighbors of the current level from $K$ thus hitting all cycles across the current and next level. There could still be cycles using an edge through the unique neighbors. To hit these, add a large independent set from the subgraph induced by the unique neighbors and delete the rest. Observe that this induced subgraph is a random graph on a smaller number of vertices. However, even for random graphs, it is open to find the largest independent set efficiently and only a factor $2$ approximation is known.

In our analysis, instead of using the two approximate algorithm for the independent set problem, we use the simple heuristic of deleting a vertex for each edge that is present in the subgraph to find an independent set at each level. In order to lower bound the size of the induced tree, it suffices to consider growing the BFS-tree up to a certain height $T$ using this heuristic and then using the $2$-approximate algorithm for independent set at height $T$ to terminate the algorithm. The size of the induced tree obtained using Algorithm Augment-BFS is at least as large as the one produced by the process just described. To simplify our analysis, it will be useful to restate the algorithm as Algorithm Grow-induced-BFS.

\begin{center}
\fbox{\parbox{6.6in}{
\begin{minipage}{6.4in}
\begin{tt}
\noindent {\bf Algorithm Grow-induced-BFS}
\begin{enumerate}
\item Start from an arbitrary vertex $v$ at level $0$, set $L_0=\{v\}$. Mark $v$ as exposed. Fix $c:=np$.
\item Explore levels $i=0,\cdots, T-1$, where $T=\left \lceil{\frac{\ln{(1/16p)}-\ln{\ln{(1/16p)}}}{\ln{(c+20\sqrt{c})}}}\right \rceil$ in BFS order as follows:
\begin{enumerate}
\item Let $K_{i+1}$ be the subset of neighbors of $L_{i}$ among the unexposed vertices, where $L_{i}$ is the set of surviving vertices at level $i$.
\item Mark the vertices in $K_{i+1}$ as exposed.
\item Let $R_{i+1}\subseteq K_{i+1}$ be the subset of vertices in $K_{i+1}$ that are {\bf \emph {unique}} neighbors of $L_i$.
\item For every edge $(u,v)$ that is present between vertices $u,v\in R_{i+1}$, add either $u$ or $v$ to $W_{i+1}$.
\item Set $L_{i+1}=R_{i+1}\setminus W_{i+1}$.\\
 \emph{(The set of surviving vertices at level $i+1$, namely $L_{i+1}$ is an independent set in the subgraph induced by $R_{i+1}$.)}
\end{enumerate}
\item On obtaining $L_{T-1}$, set $R_{T}$ = {\bf \emph{unique}} neighbors of $L_T$ among the unexposed vertices. In the subgraph induced by $R_{T}$, find an independent set $L_{T}$ as follows.
\begin{enumerate}
\item Fix an arbitrary ordering of the vertices of $R_{T}$. Repeat while $R_{T}\neq \emptyset$:
\begin{itemize}
\item Add the next vertex $v\in R_{T}$ to $L_{T}$. Let $N(v)$= neighbors of $v$ in $R_{T}$. Set $R_{T}\leftarrow R_{T}\setminus N(v)$.
\end{itemize}
\end{enumerate}
\item Return $S=V\setminus \cup_{i=0}^{T} L_{i}$ as the feedback vertex set.
\end{enumerate}
\end{tt}
\end{minipage}
}}
\end{center}
\noindent 

We remark that improving the approximation factor of the largest independent set problem in $G_{n,p}$ would also improve the size of the FVS produced. Our analysis shows that most of the vertices in the induced BFS tree get added at depth $T$ as an independent set.
Moreover, the size of this independent set is close to $(2/p)\log{np}(1-o(1))$. Consequently, any improvement on the approximation factor of the largest independent set problem in $G_{n,p}$ would also lead to improving the size of the independent set found at depth $T$. This would increase the number of vertices in the induced BFS tree and thereby reduce the number of vertices in the feedback vertex set.

Observe that Algorithm Grow-induced-BFS can be used for the directed random graph $D_{n,p}$ by ignoring the orientation of the edges to obtain a nearly optimal feedback vertex set. Such a graph obtained by ignoring the orientation of the edges is the random graph $G(n,2p)$. Further, a FVS in such a graph is also a FVS in the directed graph. Consequently, we have the following theorem.
\begin{theorem}\label{directed-FVS}
For $D_{n,p}$, there exists a polynomial time algorithm that produces a FVS of size at most $n-(1/2p)(\log{(np)}-o(1))$ with probability at least $3/4$.
\end{theorem}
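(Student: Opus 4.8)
The plan is to obtain Theorem~\ref{directed-FVS} as a black-box corollary of the undirected result, Theorem~\ref{theorem:undirected-MFVS}, by forgetting the orientations of the arcs of $D_{n,p}$ and running the algorithm of that theorem (Algorithm Grow-induced-BFS) on the resulting undirected graph. Two facts are needed: (i) a feedback vertex set of the underlying undirected graph is also a feedback vertex set of the digraph, and (ii) the underlying undirected graph of $D_{n,p}$ has exactly the distribution of $G(n,2p)$, to which Theorem~\ref{theorem:undirected-MFVS} applies since $p=o(1)$ forces $2p=o(1)$.

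For (i) I would argue deterministically. Write $G_D$ for the undirected graph on $V$ with an edge $\{u,v\}$ whenever $D$ has the arc $u\to v$ or $v\to u$; in the model $D_{n,p}$ at most one orientation of each pair is present, so $G_D$ is simple. If $S$ is a feedback vertex set of $G_D$ then $G_D-S$ is a forest, and since the underlying undirected graph of $D-S$ is precisely $G_D-S$, the digraph $D-S$ is acyclic; hence $S$ is a feedback vertex set of $D$. (Equivalently, every simple directed cycle of $D$ projects onto a simple cycle of $G_D$ on the same vertex set, so any $S$ hitting all cycles of $G_D$ hits all directed cycles of $D$.) Therefore it suffices to produce a small feedback vertex set of $G_D$.

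For (ii), by the definition of $D_{n,p}$ each unordered pair of vertices is made adjacent independently with probability $2p$ and then oriented one way or the other, so $G_D\sim G(n,2p)$, with all the randomness carried by the adjacency choices and none of it needed to determine $G_D$. Applying Theorem~\ref{theorem:undirected-MFVS} with edge probability $2p$: its polynomial-time algorithm, run on $G_D$, outputs with probability at least $3/4$ a feedback vertex set of $G_D$ of size at most $n-\tfrac{1}{2p}\log(2np)(1-o(1))$. Using $\log(2np)=\log(np)+\log 2\ge \log(np)$ and collecting lower-order terms, this is at most $n-\tfrac{1}{2p}(\log(np)-o(1))$; by (i) the same set is a feedback vertex set of $D_{n,p}$. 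Finally, computing $G_D$ from $D$ and running Grow-induced-BFS both take $\mathrm{poly}(n)$ time, which gives the claimed polynomial-time algorithm.

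There is no real obstacle here: once the ``forget orientations'' reduction and the identification $G_D\sim G(n,2p)$ are in place, the theorem is immediate from Theorem~\ref{theorem:undirected-MFVS}. The only mild care needed is to observe that the $3/4$ success probability lives entirely on $G_D\sim G(n,2p)$ (the orientations being irrelevant), together with the routine bookkeeping that rewrites the $G(n,2p)$-bound $n-\tfrac{1}{2p}\log(2np)(1-o(1))$ in the stated form.
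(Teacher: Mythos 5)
Your proposal is correct and follows essentially the same route as the paper: the paper also proves Theorem~\ref{directed-FVS} by ignoring orientations, observing that the resulting undirected graph is $G(n,2p)$ and that any feedback vertex set of it is a feedback vertex set of $D_{n,p}$, and then invoking Theorem~\ref{theorem:undirected-MFVS} (via Algorithm Grow-induced-BFS) with edge probability $2p$. Your write-up just spells out the deterministic reduction and the distributional identification slightly more explicitly than the paper does.
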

By Theorem \ref{theorem:directed-FVSlowerbound}, we see that the algorithm is nearly optimal for directed random graphs.

Next, we analyze Algorithm Grow-induced-BFS to find the size of the FVS that it returns. For $i=0,\cdots,T$, let
$L_i$ be the set of surviving vertices at level $i$ with $l_i:=|L_i|$,
$R_{i+1}$ be the set of \emph{unique} neighbors of $L_i$ with $r_{i+1}:=|R_{i+1}|$, and
$U_{i}$ be the set of unexposed vertices of the graph after $i$ levels of BFS exploration with $u_{i}:=|U_{i}|$. Observe that $U_i:=V\setminus (L_0\cup_{j=1}^{i} K_i)$.

We will need the following theorem due to Frieze \cite{frieze-ind-set}, about the size of the independent set.
\begin{theorem}\cite{frieze-ind-set} \label{theorem:large-ind-set}
Let $d=np$ and $\eps>0$ be fixed. Suppose $d_{\eps}\leq d=o(n)$ for some sufficiently large fixed constant $d_{\eps}$. Then, almost surely, the size of the independent set in $G_{n,p}$ is at least
\[
\left(\frac{2}{p}\right)(\log{np}-\log{\log{np}}-\log{2}+1-0.5\eps).
\]
\end{theorem}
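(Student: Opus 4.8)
The plan is to follow the standard route for the independence number of a sparse random graph: a sharp first‑moment calculation to locate the threshold, a second‑moment (possibly truncated) argument to produce an independent set of essentially threshold size with probability bounded away from zero, and a bounded‑differences concentration inequality to upgrade this to an almost‑sure statement. Fix $\epsilon>0$ and set
\[
k \;:=\; \left\lceil \frac{2}{p}\Big(\log(np)-\log\log(np)-\log 2+1-\tfrac12\epsilon\Big)\right\rceil ,
\]
and let $X$ count the $k$-subsets of $V$ that span no edge, so $\E[X]=\binom{n}{k}(1-p)^{\binom{k}{2}}$. First I would expand $\log\E[X]$ using Stirling's formula and $\log(1-p)=-p-\tfrac12 p^{2}-\cdots$ (legitimate since $p=o(1)$); the value of $k$ at which $\log\E[X]$ crosses $0$ is $k^{*}=\tfrac{2}{p}\big(\log(np)-\log\log(np)-\log 2+1+o(1)\big)$, and because our $k$ sits a fixed distance $\tfrac{\epsilon}{p}$ below $k^{*}$ one gets $\E[X]\to\infty$, indeed $\E[X]=n^{\omega(1)}$. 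Pinning down the lower‑order constants $-\log\log(np)-\log 2+1$ and verifying this estimate uniformly over the whole range $d_{\epsilon}\le np=o(n)$ is the delicate part of this step.

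Next I would control $\E[X^{2}]=\sum_{j=0}^{k}\binom{n}{k}\binom{k}{j}\binom{n-k}{k-j}(1-p)^{2\binom{k}{2}-\binom{j}{2}}$ by grouping ordered pairs of $k$-sets according to the size $j$ of their overlap. The $j=0$ term is $(1-o(1))\E[X]^{2}$ since $k=o(n)$; when $np$ is large the remaining terms are lower order and Chebyshev gives $\Pr[X>0]=1-o(1)$ outright. When $np$ is only moderately large the intermediate‑$j$ terms are heavier — already the $j=1$ term is of order $k^{2}/n$, which need not vanish — so instead I would restrict $X$ to count only independent $k$-sets $S$ for which the number of vertices outside $S$ adjacent to no vertex of $S$ is close to its expectation; this truncation suppresses exactly the pairs with large overlap and restores $\E[(X')^{2}]=(1+o(1))\E[X']^{2}$, hence $\Pr[\alpha(G_{n,p})\ge k]=1-o(1)$. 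Finally, if one only obtains $\Pr[\alpha\ge k]\ge n^{-O(1)}$ this way, a vertex‑exposure martingale finishes the job: $\alpha$ changes by at most $1$ when the edges incident to one vertex are resampled, so Azuma's inequality concentrates $\alpha$ in a window of width $O(\sqrt{n\log n})$ about its mean, which combined with the probability bound gives $\alpha\ge k-O(\sqrt{n\log n})$ almost surely — an error of the right size whenever $d=np$ is not too large.

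The main obstacle is getting the two estimates sharp enough and with compatible ranges of validity: the first‑moment asymptotics must be accurate to additive error $o(1/p)$, and the variance estimate — or the truncation suppressing it — must hold, together with the concentration step, across every $d$ with $d_{\epsilon}\le d=o(n)$ under the single fixed slack $\tfrac12\epsilon$. The genuinely hard regime is intermediate $d$ (around $\sqrt{n}$ up to logarithmic factors), where $k^{2}/n$ does not vanish, so the plain second moment fails, yet $\sqrt{n\log n}$ is not negligible compared with $1/p$, so plain concentration also fails; this is precisely where the truncated second moment must be executed with care. Since this is a known result I would ultimately just invoke Frieze's theorem, but the argument above is the shape its proof takes.
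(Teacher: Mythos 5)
First, note that the paper does not prove this statement at all: Theorem~\ref{theorem:large-ind-set} is imported verbatim from Frieze \cite{frieze-ind-set} and used as a black box in the proof of Theorem~\ref{theorem:undirected-MFVS}. So the appropriate move in the context of this paper is exactly what you do in your last sentence -- cite Frieze -- and there is no internal proof to compare against.

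Judged as a standalone proof attempt, however, your sketch has a genuine gap, and it is the one you yourself flag. The first-moment location of the threshold and the $j=0$ domination when $k=o(n)$ are fine, and the vertex-exposure Azuma step is correct as stated, but the two tools you combine do not cover the full range $d_{\epsilon}\le d=o(n)$ with the single fixed slack $\tfrac{\epsilon}{2p}$: the plain second moment already fails once $k^{2}/n$ is not small (i.e.\ for $d$ below roughly $\sqrt{n}$ up to logarithmic factors), while the Azuma window $O(\sqrt{n\log n})$ is only $o(1/p)$ when $d=o(\sqrt{n/\log n})$, so for $d$ around and above $\sqrt{n}$ the boosting step is useless. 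Everything therefore hinges on the ``truncated second moment'' (restricting to independent $k$-sets with typical non-neighborhood size), and this is precisely the step you do not carry out: you neither verify that the truncation keeps $\E[X']$ of the same order nor that it suppresses the intermediate-overlap terms uniformly in $d$, and it is not at all routine that this particular truncation does so. Frieze's actual argument for the delicate range is more involved than this sketch (it does not reduce to a one-line Chebyshev-plus-Azuma combination), so the sketch should be read as a plausible plan whose critical middle regime is unresolved -- which is why, in the end, invoking \cite{frieze-ind-set}, as both you and the paper do, is the right resolution.
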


\subsection{Large Set of Unique Neighbors}
The following lemma gives a concentration of the number of surviving vertices, unexposed vertices and unique neighbors to survivors at a particular level. It shows that upon exploring $t$ levels according to the algorithm, the number of surviving vertices at the $t$-th level, $l_t$, is not too small while the number of unexposed vertices, $u_t$, is large. It also shows a lower bound on the number of unique neighbors $r_{t+1}$ to a level of survivors. This fact will be used in proving Theorem \ref{theorem:undirected-MFVS}.
\begin{lemma}\label{lemma:largesurvivors}
Let $c:=np$ and $T$ be the largest integer that satisfies $16Tp(c+20\sqrt{c})^{T-1}\leq 1/2$.
Then, with probability at least $3/4$, $\forall t \in \{0,1,\cdots,T-1\}$,
\begin{enumerate}
\item
\begin{align*}
u_t &\leq \left(n-\frac{1}{4}\sum_{i=0}^t(c-20\sqrt{c})^i \right)\left(1+\sqrt{\frac{\ln{\ln{n}}}{n}}\right) \\
u_t &\geq (n-\sum_{i=0}^t(c+20\sqrt{c})^i)\left(1-\sqrt{\frac{\ln{\ln{n}}}{n}}\right)
\end{align*}
\item
\begin{align*}
l_t &\leq \left(c+20\sqrt{c} \right)^t\\
l_t &\geq \left(c-20\sqrt{c}\right)^t(1-16Tp(c+20\sqrt{c})^t)\\
	& \ \ \ \times \left(1-\frac{\sum_{i=0}^t(c+20\sqrt{c})^i}{n}\right)
\end{align*}
\item
\begin{align*}
r_t &\leq (c+20\sqrt{c})^{t+1}\left(1+\sqrt{\frac{\ln{\ln{n}}}{n}}\right)\\
r_t &\geq \frac{(c-20\sqrt{c})^{t+1}}{4} \left(1-\frac{\sum_{i=0}^{t+1}(c+20\sqrt{c})^i}{n}\right)\\
  & \ \ \ \times \left(1-\sqrt{\frac{\ln{\ln{n}}}{n}}\right)
\end{align*}
\end{enumerate}
\end{lemma}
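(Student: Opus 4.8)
The plan is to prove the three pairs of bounds simultaneously by induction on $t$, conditioning at each level on the realized values $l_0,\dots,l_t$ and $u_t$, and applying concentration inequalities to the one-step BFS exploration. The key observation is that, conditioned on $L_t$ being a set of $l_t$ surviving vertices and on $U_t$ being the current pool of $u_t$ unexposed vertices, the edges between $L_t$ and $U_t$ are still independent $\mathrm{Bernoulli}(p)$ (none of them has been examined yet, since surviving vertices at level $t$ only had their edges to levels $\le t$ revealed). So the number of exposed neighbors $|K_{t+1}|$, the number of \emph{unique} neighbors $r_{t+1}$, and the number of edges inside $R_{t+1}$ are all sums of independent (or negatively associated) indicators with computable means, and Chernoff/Bernstein bounds apply.

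First I would set up the expectations. Given $l_t$ and $u_t$, a fixed unexposed vertex $w$ is a unique neighbor of $L_t$ with probability $l_t p(1-p)^{l_t-1}$, so $\E[r_{t+1}\mid l_t,u_t] = u_t\, l_t p (1-p)^{l_t-1}$; since $p=o(1)$ and $l_t$ is polylogarithmic for $t<T$, the factor $(1-p)^{l_t-1} = 1-o(1)$, giving $\E[r_{t+1}]\approx u_t l_t p$. Similarly $\E[|K_{t+1}|\mid l_t,u_t]=u_t(1-(1-p)^{l_t})$, which is between $u_t l_t p(1-o(1))$ and $u_t l_t p$. The choice of the ``$\pm 20\sqrt{c}$'' slack in the statement is exactly what is needed so that one application of a Chernoff bound at each level, at deviation $\Theta(\sqrt{c})$ relative to mean $\Theta(c)=\Theta(np)$ for the per-survivor branching, fails with probability at most (say) $1/(100\,T\cdot(\text{levels}))$ — small enough that a union bound over all $t\le T$ and over the three quantities leaves total failure probability $\le 1/4$. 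Concretely: at level $t$, conditioned on the inductive bounds $l_t\in[(c-20\sqrt c)^t(1-o(1)),(c+20\sqrt c)^t]$ and the corresponding bounds on $u_t$, the conditional mean of $r_{t+1}$ is within $[\tfrac14(c-20\sqrt c)^{t+1}(1-o(1)),\,(c+20\sqrt c)^{t+1}]$ up to the $u_t/n$ correction factors, and a Chernoff bound pushes $r_{t+1}$ into the claimed window. The bound on $W_{i+1}$ (one vertex deleted per edge inside $R_{t+1}$) is what turns $r_{t+1}$ into $l_{t+1}$: the expected number of such edges is $\binom{r_{t+1}}{2}p \le r_{t+1}^2 p/2$, and since $r_{t+1}\le (c+20\sqrt c)^{t+1}$ and $t+1\le T$ we have $r_{t+1}p \le 16Tp(c+20\sqrt c)^{T-1}\cdot(\text{stuff})\le 1/2$ by the definition of $T$, so at most a $(1-16Tp(c+20\sqrt c)^t)$-fraction of $R_{t+1}$ is deleted — this is precisely the second factor in the lower bound on $l_t$. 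For $u_t$, note $u_{t+1}=u_t-|K_{t+1}|$, so the telescoping sums $\sum_{i=0}^t (c\pm 20\sqrt c)^i$ in the statement come from summing the per-level upper/lower bounds on $|K_{i+1}|\approx u_i l_i p$, using $l_i\approx(c\pm20\sqrt c)^i$ and $u_i\le n$; the $(1\pm\sqrt{\ln\ln n/n})$ factors absorb the accumulated Chernoff slack on the $u$-process, where one uses a martingale/Azuma bound on $n-u_t$ since $u_t$ is not a simple sum.

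The main obstacle, and where I would spend the most care, is the \emph{conditioning bookkeeping}: making sure that at each level the edges being examined are genuinely fresh, and that conditioning on the \emph{values} (not the sets) $l_t,u_t$ preserves enough independence to apply Chernoff. The clean way is to reveal edges in BFS order and define the filtration $\mathcal F_t$ generated by the exploration through level $t$; then $r_{t+1}$, $|K_{t+1}|$, and $|W_{t+1}|$ are, conditioned on $\mathcal F_t$, functions of independent coin flips, so standard bounds apply conditionally, and one chains the conditional high-probability events via a union bound over the at most $T$ levels — and $T = O(\log(1/p)/\log(np))$ is small, so the union bound is cheap. A secondary nuisance is that the lower bound on $l_{t+1}$ requires $l_t$ to not be \emph{too large} either (else $R_{t+1}$ could be too big and the edge-deletion fraction could approach $1$); this is why both an upper and a lower bound on $l_t$ must be carried through the induction together rather than separately, and why the upper bound $l_t\le(c+20\sqrt c)^t$ is stated with no error factor — it is a deterministic-looking clean bound that the Chernoff upper tail delivers with room to spare. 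Once the induction closure is checked at one generic level with all constants tracked, the remaining work is the routine arithmetic of verifying that the stated slack factors are consistent, which I would not grind through here.
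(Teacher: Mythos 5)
Your outline is essentially the paper's own argument: reveal edges level by level, note that conditioned on the exploration through level $t$ the quantities $u_{t+1}$ and $r_{t+1}$ are binomials with $u_t$ trials and success probabilities $(1-p)^{l_t}$ and $q_t=pl_t(1-p)^{l_t-1}$ respectively, apply a Chernoff bound at each level with per-level failure probability $O(1/t^2)$ or $O(1/T)$, pass from $r_{t+1}$ to $l_{t+1}$ by bounding the number of edges inside $R_{t+1}$, and close the induction (carrying upper and lower bounds on $l_t$ and $u_t$ together, exactly as you say) with a union bound over the at most $T$ levels to keep the total failure probability below $1/4$.

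Two remarks on the one place where your sketch diverges from what is actually needed. First, for the step $r_{t+1}\to l_{t+1}$ you pass from ``the expected number of edges inside $R_{t+1}$ is at most $r_{t+1}^2p/2$'' directly to ``at least a $(1-16Tp(c+20\sqrt c)^t)$-fraction of $R_{t+1}$ survives,'' but an expectation alone does not give a per-level high-probability statement; the paper gets the factor $16T$ precisely by applying Markov's inequality to $m_{t+1}$ at threshold $8Tr_{t+1}^2p$, which costs failure probability $1/(16T)$ per level and sums to $1/16$ over the $T$ levels. A generic multiplicative Chernoff bound is not the clean tool here because the conditional mean of $m_{t+1}$ can be of constant order (or smaller), where a deviation factor of $8T$ does not automatically yield an $O(1/T)$ tail; one either uses Markov as the paper does, or argues by cases on the size of the mean. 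Second, no martingale/Azuma argument is needed for $u_t$: conditioned on the exploration through level $t$, $u_{t+1}$ is itself a sum of independent indicators (one per vertex of $U_t$), so the same conditional Chernoff bound used for $r_{t+1}$ applies, and the $\bigl(1\pm\sqrt{\ln\ln n/n}\bigr)$ factors are absorbed level by level rather than by a separate bound on $n-u_t$. With those two points repaired, your induction closes exactly as in the paper.
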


Now, we are ready to prove Theorem \ref{theorem:undirected-MFVS}.

\subsection{Proof of main theorem}
\begin{proof} [Proof of Theorem \ref{theorem:undirected-MFVS}]
Our objective is to use the fact that the size of the surviving set of vertices is large when the algorithm has explored $T-1$ levels. Moreover, the number of unexposed vertices is also large. Thus, there is a large independent set among the \emph{unique} neighbors of the surviving vertices. This set along with the surviving vertices up to level $T-1$ will form a large induced tree. We will now prove that the size of the independent set among the \emph{unique} neighbors of $L_{T-1}$ is large.

By Theorem \ref{theorem:large-ind-set}, if $r_{T}p>d_{\eps}$ for some constant $d_{\eps}$ and $r_{T}p=o(r_{T})$, then there exists an independent set of size $(2/p)\log{(r_{T}p)}(1-o(1))$. It suffices to prove that $r_{T}$ is large and is such that $r_Tp>d_{\eps}$.

Note that the choice of $T=\left \lceil \frac{\ln{(1/16p)}-\ln{\ln{(1/16p)}}}{\ln{(c+20\sqrt{c})}}\right \rceil$ used in the algorithm satisfies the hypothesis of Lemma \ref{lemma:largesurvivors}. Therefore, using Lemma \ref{lemma:largesurvivors}, with probability at least $3/4$, we have
\begin{align*}
r_{T}&\geq \frac{(c-20\sqrt{c})^{T}}{4} \left(1-\frac{\sum_{i=0}^{T}(c+20\sqrt{c})^i}{n}\right)\\
 & \ \ \ \times \left(1-\sqrt{\frac{\ln{\ln{n}}}{n}}\right)\\
&\geq \frac{(c-20\sqrt{c})}{64p}\left(1-\frac{\sum_{i=0}^{T}(c+20\sqrt{c})^i}{n}\right)\\
 & \ \ \ \times \left(1-\sqrt{\frac{\ln{\ln{n}}}{n}}\right)\\
&\geq \frac{c-20\sqrt{c}}{2^8p} \geq \frac{d_{\eps}}{p}
\end{align*}
for sufficiently large $c$ since
\[
\left(1-\frac{\sum_{i=0}^{T}(c+20\sqrt{c})^i}{n}\right)\left(1-\sqrt{\frac{\ln{\ln{n}}}{n}}\right)\geq \frac{15}{16}\cdot\frac{1}{2}\\.
\]

Consequently, by Theorem \ref{theorem:large-ind-set}, there exists an independent set of size at least $\left({2}/{p}\right)\log {(r_{T}p)}(1-o(1))$. Moreover, step 3 of the algorithm finds a $2$-approximate independent set (see \cite{grimmett-diarmid-1975,diarmid-1984}). Therefore, the size of the independent set found in step 3 is at least $(1/p)\log{r_Tp}(1-o(1))$, which is greater than
\[
\left(\frac{1}{p}\right)\log {(c)}(1-o(1)) =  \left(\frac{1}{p}\right)\log {(np)}(1-o(1)).
\]
Note that this set gets added to the tree obtained by the algorithm which increases the number of vertices in the tree while maintaining the acyclic property of the induced subgraph. Hence, with probability at least $3/4$, the induced subgraph has $\sum_{i=0}^{T-1}l_{i} + (1/p)\log{np}(1-o(1))$ vertices. Consequently, the FVS obtained has size at most $n-(1/p)\log{np}(1-o(1))$ with probability at least $3/4$.

\end{proof}

\section{Planted Feedback Vertex Set Problem}
We prove Theorems \ref{theorem:planted-directedFVS} and \ref{theorem:algorithm-planted-directedFVS} in this section.

The proof of Theorems \ref{theorem:planted-directedFVS} and \ref{theorem:algorithm-planted-directedFVS} are based on the following fact formalized in Lemma \ref{lemma:cycle-through-every-vertex}: if $S\subseteq V\setminus P$ is a subset of vertices of size at least $(1-\delta)n/10$, then with high probability, every vertex $u\in P$ induces a $k$-cycle with vertices in $S$. Consequently, a small hitting set $H$ for the $k$-cycles should contain either all vertices in $P$ or most vertices from $V\setminus P$. If some vertex $u\in P$ is not present in $H$, then the size of $H$ will be large since it should contain most vertices from $V\setminus P$. This contradicts the fact that $H$ is a small hitting set. Thus $H$ should contain the planted feedback vertex set $P$. This fact is stated in a general form based on the size of $H$ in Lemma \ref{lemma:small-hitting-set-contains-P}.

For Theorem \ref{theorem:planted-directedFVS}, $H$ is the smallest hitting set. By the previous argument $H\supseteq P$, and we are done since no additional vertex $v\in V\setminus P$ will be present in $H$ (in fact, $P$ is a hitting set for all cycles since it is a feedback vertex set). We formalize these arguments in this section.

\begin{lemma}\label{lemma:cycle-through-every-vertex}
Let $D_{n,\delta,p}$ be a planted directed random graph where $p\geq C/n^{1-2/k}$ for some constants $C,k,\delta$. Then, with high probability, for every vertex $v\in P$, there exists a cycle of size $k$ through $v$ in the subgraph induced by $S\cup \{v\}$ in $D_{n,\delta,p}$ if $S$ is a subset of $V\setminus P$ of size at least $|V\setminus P|/10=(1-\delta)n/10$.
\end{lemma}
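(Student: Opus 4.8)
The plan is to fix a vertex $v\in P$, show that $v$ lies on an expected $\Omega(n)$ many $k$-cycles whose remaining $k-1$ vertices all lie in $S$, and then invoke Janson's inequality to conclude that at least one such cycle is actually present except with probability $e^{-\Omega(n^{2/k})}$; a union bound over the at most $\delta n$ choices of $v$ then finishes, since $n^{2/k}=\omega(\log n)$ for fixed $k$. Since a $k$-cycle through $v$ contained in $S'\cup\{v\}$ with $S'\subseteq S$ is also contained in $S\cup\{v\}$, it suffices to treat $S$ of size exactly $m:=\lceil (1-\delta)n/10\rceil$.

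To set up the candidates: for each $(k-1)$-subset $T$ of $S$, list its elements as $s_1\prec\cdots\prec s_{k-1}$ according to the fixed DAG permutation on $V\setminus P$, and let $C_T$ be the potential cycle $v\to s_1\to\cdots\to s_{k-1}\to v$. Because every arc internal to $V\setminus P$ goes forward in this permutation, \emph{every} $k$-cycle through $v$ with all other vertices in $S$ equals some $C_T$. Its $k$ arcs — the two incident to $v$ and the $k-2$ forward arcs among the $s_i$ — correspond to $k$ distinct vertex pairs and are each present independently with probability $p$ (the arcs at $v$ because the pair is an edge with probability $2p$ and then oriented as needed with probability $1/2$; the internal arcs because a forward pair is an arc with probability $p$). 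Hence, with $X$ the number of present $C_T$,
\[
\mu:=\E[X]=\binom{m}{k-1}p^{k}=\Theta\!\left(n^{k-1}p^{k}\right),
\]
and since $p\ge C/n^{1-2/k}$ we get $p^{k}\ge C^{k}n^{-(k-2)}$, whence $\mu=\Omega(C^{k}n)=\Omega(n)$ and $np\ge Cn^{2/k}$.

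Next I would bound the dependency sum $\Delta:=\sum\Pr[C_T\text{ and }C_{T'}\text{ both present}]$ over ordered pairs $T\ne T'$ whose cycles share at least one arc. Two distinct $k$-cycles through a common vertex cannot share $k-1$ arcs (deleting one arc of a $k$-cycle leaves a Hamiltonian path of it, which has a unique closing arc), so I group the pairs by the number $j\in\{1,\dots,k-2\}$ of shared arcs: then $C_T\cup C_{T'}$ has $2k-j$ arcs and at most $2k-j-1$ vertices (one being $v$), so there are $O\!\left(n^{2k-j-2}\right)$ such pairs (implied constant depending only on $k$), each contributing $p^{2k-j}$. Comparing with $\mu^{2}=\Theta(n^{2k-2}p^{2k})$,
\[
\Delta=\sum_{j=1}^{k-2}O\!\left((np)^{-j}\right)\mu^{2}=O\!\left((np)^{-1}\right)\mu^{2}=O\!\left(n^{-2/k}\right)\mu^{2}.
\]
Since $\mu=\Omega(n)$ dominates $n^{2/k}$, we have $\Delta\ge\mu$ for large $n$, so the (generalized) Janson inequality gives $\Pr[X=0]\le\exp\!\left(-\mu^{2}/(2\Delta)\right)=\exp\!\left(-\Omega(n^{2/k})\right)$ — and in the remaining regime $\Delta<\mu$ the basic form $\exp(-\mu+\Delta/2)\le\exp(-\mu/2)$ is even smaller. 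A union bound over $v\in P$ then bounds the failure probability by $\delta n\cdot\exp(-\Omega(n^{2/k}))=o(1)$.

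The main obstacle is getting the failure probability per vertex down to $o(1/n)$: an ordinary second-moment (Chebyshev) estimate only gives $\Pr[X=0]\le(\mu+\Delta)/\mu^{2}=O(n^{-2/k})$, which is $o(1)$ but not small enough to survive the union bound over the $\delta n$ vertices of $P$, so the exponential tail from Janson's inequality (i.e., the near-independence among the $C_T$) is essential. A secondary point is to check that Janson's inequality applies: each event ``$C_T$ is present'' is a conjunction of events, one per vertex pair, each determined by a single independent coordinate (the status and orientation of that pair), which is exactly the setting in which the inequality holds. The configuration count $O(n^{2k-j-2})$ and the remaining bookkeeping in the $\Delta$ estimate are routine.
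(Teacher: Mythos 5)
Your route is genuinely different from the paper's. The paper fixes $v$ and $S$, applies a plain second-moment (Chebyshev) bound to the count $X_v$ of $k$-cycles through $v$ inside $S\cup\{v\}$, and then handles ``for every $v\in P$'' not by a union bound but by multiplying the per-vertex success probabilities and exponentiating (which tacitly needs an independence/positive-association justification). You instead push the per-vertex failure probability down to $e^{-\Omega(n^{2/k})}$ via Janson's inequality so that a crude union bound over the $\delta n$ vertices of $P$ suffices. Your supporting observations are sound and in places sharper than the paper's: the remark that any $k$-cycle through $v$ inside $S\cup\{v\}$ must traverse its $S$-vertices in increasing order of the planted DAG permutation (so each $(k-1)$-subset yields exactly one candidate cycle, present with probability exactly $p^k$) is left implicit in the paper; your $\mu=\Theta(n^{k-1}p^k)=\Omega(n)$ and $\Delta=O(\mu^2/np)$ computations are correct; and your diagnosis that Chebyshev alone only gives $O(1/np)=O(n^{-2/k})$ per vertex near the threshold is accurate --- indeed this covariance term is exactly what the paper drops when it replaces $\Pr(X_v=0)$ by $1/\mathbb{E}[X_v]$ before raising to the power $\delta n$. (Like the paper, you prove the statement for each fixed $S$; neither argument does a union over all large $S$, so you are not at a disadvantage there.)

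The one step that needs repair is your closing claim that this is ``exactly the setting in which the inequality holds.'' Janson's inequality (and its generalized form) is proved for increasing events --- events of the form $B_i\subseteq R$ for a random set $R$ with independent coordinates --- because the Boppana--Spencer argument leans on Harris/FKG. Your family is not of this form: the pair $\{v,u\}$ is demanded in the direction $v\to u$ by every candidate cycle in which $u$ is the successor of $v$ (i.e.\ $u=\min T$) and in the direction $u\to v$ by every cycle in which $u$ is the predecessor ($u=\max T$); two such events are disjoint, hence negatively correlated, and no ordering of the three states of a pair (absent, $v\to u$, $u\to v$) makes all your events simultaneously increasing, so the standard inequality does not apply off the shelf. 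The fix is cheap: split $S$ by the DAG order into its first half $S_1$ and second half $S_2$, and keep only the candidate cycles with $s_1\in S_1$ and $s_2,\dots,s_{k-1}\in S_2$. Then each vertex pair is used by the family in at most one direction, the relevant arc indicators are independent Bernoulli($p$) variables, all events are increasing, and the counts change only by constants depending on $k$, so $\mu=\Omega(n)$ and $\Delta=O(\mu^2/np)$ persist and your Janson-plus-union-bound conclusion goes through.
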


We give a proof of this lemma by the second moment method later. It leads to the following important consequence which will be used to prove Theorems \ref{theorem:planted-directedFVS} and \ref{theorem:algorithm-planted-directedFVS}. It states that every sufficiently small hitting set for the $k$-cycles in $D_{n,\delta,p}$ should contain every vertex from the planted feedback vertex set.
\begin{lemma}\label{lemma:small-hitting-set-contains-P}
Let $H$ be a hitting set for the $k$-cycles in $D_{n,\delta,p}$ where $p\geq C/n^{1-2/k}$ for some constants $C,k,\delta$. If $|H|\leq t\delta n$ where $t\leq 9(1-\delta)/10\delta$, then $H\supseteq P$.
\end{lemma}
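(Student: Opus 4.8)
The plan is to argue by contradiction: if a hitting set $H$ for the $k$-cycles omitted some planted vertex, then $H$ would be forced to contain so many vertices of $V\setminus P$ that its size would exceed the allowed bound $t\delta n$. The engine is Lemma \ref{lemma:cycle-through-every-vertex}, which guarantees that every planted vertex closes a $k$-cycle using any sufficiently large surviving subset of $V\setminus P$.

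Concretely, I would suppose for contradiction that $H\not\supseteq P$ and fix a vertex $v\in P\setminus H$. Let $S:=(V\setminus P)\setminus H$ be the set of non-planted vertices that $H$ does not delete. Since $|V\setminus P|=(1-\delta)n$, $|H|\le t\delta n$, and $t\le 9(1-\delta)/(10\delta)$ gives $t\delta n\le \tfrac{9}{10}(1-\delta)n$, we get
\[
|S|\;\ge\;|V\setminus P|-|H|\;\ge\;(1-\delta)n-\tfrac{9}{10}(1-\delta)n\;=\;\tfrac{(1-\delta)n}{10}.
\]
Thus $S$ is a subset of $V\setminus P$ of size at least $(1-\delta)n/10$, which is exactly the hypothesis needed to apply Lemma \ref{lemma:cycle-through-every-vertex}.

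Next I would invoke Lemma \ref{lemma:cycle-through-every-vertex} on the high-probability event on which its conclusion holds (uniformly over all admissible sets $S\subseteq V\setminus P$ of size at least $(1-\delta)n/10$ and all $v\in P$): there is a $k$-cycle $\mathcal{C}$ through $v$ whose vertex set lies in $S\cup\{v\}$. Because $H$ hits every $k$-cycle, $H$ must contain a vertex of $\mathcal{C}$. But $V(\mathcal{C})\subseteq S\cup\{v\}$, and by construction $S\cap H=\emptyset$ while $v\notin H$, so $H\cap V(\mathcal{C})=\emptyset$ — a contradiction. Hence no vertex of $P$ can be omitted, i.e.\ $H\supseteq P$, establishing the lemma on the event of probability $1-o(1)$.

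The one point requiring care is that the set $S$ above is defined in terms of $H$, which may itself depend on the random graph; so the argument needs the conclusion of Lemma \ref{lemma:cycle-through-every-vertex} to hold simultaneously for \emph{every} eligible $S$, not just a fixed one. This is precisely how that lemma is phrased (the failure event is already folded into a single high-probability statement), so there is no additional obstacle, and the present lemma reduces to the elementary counting displayed above; all the genuine difficulty sits in the second-moment estimate proving Lemma \ref{lemma:cycle-through-every-vertex}.
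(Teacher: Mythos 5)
Your proof is correct and follows essentially the same route as the paper: assume some $v\in P$ is missed, note that $|H|\le t\delta n\le \tfrac{9}{10}(1-\delta)n$ forces the unhit part of $V\setminus P$ to have size at least $(1-\delta)n/10$, and then invoke Lemma \ref{lemma:cycle-through-every-vertex} to produce an unhit $k$-cycle, a contradiction. Your added remark that the conclusion of Lemma \ref{lemma:cycle-through-every-vertex} must hold simultaneously for all eligible subsets $S$ (since $H$ depends on the graph) is a fair point about how that lemma is being used, but it concerns that lemma's statement, not your argument here, which matches the paper's.
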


\begin{proof}
Suppose $u\in P$ and $u\not \in H$. Then $H$ should contain at least $|V\setminus P|-|V\setminus P|/10$ vertices from $V\setminus P$, else  by Lemma \ref{lemma:cycle-through-every-vertex}, there exists a $k$-cycle involving $u$ and some $k-1$ vertices among the $|V\setminus P|/10$ vertices that $H$ does not contain contradicting the fact that $H$ hits all cycles of length $k$. Therefore, $|H|>|V\setminus P|-|V\setminus P|/10=(1-\delta)9n/10 \geq t\delta n$ by the choice of $t$. Thus, the size of $H$ is greater than $t\delta n$, a contradiction.
\end{proof}

\begin{proof}[Proof of Theorem \ref{theorem:planted-directedFVS}]
We will first show that the smallest hitting set for the $k$-cycles in $D_{n,\delta,p}$ is of size exactly $|P|=\delta n$.
By Lemma \ref{lemma:cycle-through-every-vertex} there exists a $k$-cycle through every vertex $v\in P$ and some $\{u_1,\cdots,u_{k-1}\}\subset S$ if $S\subset V\setminus P$ and $|S|\geq (1-\delta)n/10$.
\begin{lemma}\label{lemma:lower-bound-HS}
If a subset $H\subseteq V$ hits all cycles of length $k$ in $D_{n,\delta,p}$, then $|H|\geq |P|$.
\end{lemma}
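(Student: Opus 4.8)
The plan is to deduce Lemma~\ref{lemma:lower-bound-HS} from Lemma~\ref{lemma:small-hitting-set-contains-P} by a short dichotomy on the cardinality of $H$, using the standing hypothesis $\delta\leq 9/19$ of Theorem~\ref{theorem:planted-directedFVS}. Concretely, I would invoke Lemma~\ref{lemma:small-hitting-set-contains-P} with the extremal admissible choice $t=9(1-\delta)/(10\delta)$ (which satisfies $t\leq 9(1-\delta)/(10\delta)$), so that the threshold reads $t\delta n = 9(1-\delta)n/10$.

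First, suppose $|H|\leq 9(1-\delta)n/10$. Since $H$ hits all $k$-cycles and has size at most $t\delta n$, Lemma~\ref{lemma:small-hitting-set-contains-P} yields $H\supseteq P$, and therefore $|H|\geq |P| = \delta n$. Otherwise $|H| > 9(1-\delta)n/10$, and here the point is purely arithmetic: the assumption $\delta\leq 9/19$ is equivalent to $9(1-\delta)\geq 10\delta$, i.e.\ $9(1-\delta)/10 \geq \delta$, so $|H| > 9(1-\delta)n/10 \geq \delta n = |P|$. In both cases $|H|\geq |P|$, which is the claim.

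The substantive content of the argument has already been carried out upstream: the existence, with high probability, of a $k$-cycle through every planted vertex using only vertices of an arbitrary large subset of $V\setminus P$ (Lemma~\ref{lemma:cycle-through-every-vertex}, proved by the second moment method), and its structural consequence Lemma~\ref{lemma:small-hitting-set-contains-P}. Given those, Lemma~\ref{lemma:lower-bound-HS} is a two-line case split with no real obstacle; the only thing requiring care is bookkeeping of the constants so that the two regimes of the dichotomy actually cover all values of $|H|$ — this is precisely what pins down the requirement $\delta\leq 9/19$, and is the one place where a sign error would break the proof.
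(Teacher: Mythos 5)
Your proof is correct and is in substance the same argument as the paper's: the paper proves the lemma by rerunning the missing-vertex argument of Lemma~\ref{lemma:small-hitting-set-contains-P} directly from Lemma~\ref{lemma:cycle-through-every-vertex} (either $P\subseteq H$, or $|H|>9(1-\delta)n/10\geq\delta n$ since $\delta\leq 9/19$), which is exactly what your application of Lemma~\ref{lemma:small-hitting-set-contains-P} with the extremal $t=9(1-\delta)/(10\delta)$ packages. Your version merely factors the lemma through the earlier statement instead of repeating its proof, and the case split and constants check out.
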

\begin{proof}[Proof of Lemma \ref{lemma:lower-bound-HS}]
If $H$ contains all vertices in $P$, then we are done. Suppose not. Let $u\in P$ and $u\not \in H$. Then $H$ should contain at least $|V\setminus P|-|V\setminus P|/10$ vertices from $V\setminus P$, else by Lemma \ref{lemma:cycle-through-every-vertex}, there exists a $k$-cycle involving $u$ and some $k-1$ vertices among the $|V\setminus P|/10$ vertices that $H$ does not contain. This would contradict the fact that $H$ hits all cycles of length $k$. Therefore, $|H|>|V\setminus P|-|V\setminus P|/10=(1-\delta)9n/10\geq \delta n=|P|$ since $\delta\leq 9/19$.
\end{proof}

Therefore, every hitting set for the subset of $k$-cycles should be of size at least $|P|=\delta n$. Also, we know that $P$ is a hitting set for the $k$-cycles since $P$ is a feedback vertex set in $D_{n,\delta,p}$. Thus, the optimum hitting set for the $k$-cycles is of size exactly $|P|$.

Let $H$ be the smallest hitting set for the $k$-cycles. Then $|H|=\delta n$. It is easily verified that $t=1$ satisfies the conditions of Lemma \ref{lemma:small-hitting-set-contains-P} if $\delta\leq 9/19$. Therefore, $H\supseteq P$. Along with the fact that $H=\delta n =|P|$, we conclude that $H=P$.
\end{proof}

\subsection{Algorithm to Recover Planted Feedback Vertex Set}
In this section, we give an algorithm to recover the planted feedback vertex set in $D_{n,\delta,p}$ thereby proving Theorem \ref{theorem:algorithm-planted-directedFVS}. Theorem \ref{theorem:planted-directedFVS} suggests an algorithm where one would obtain all cycles of length $k$ and find the best hitting set for these set of cycles. Even though the number of $k$-cycles is polynomial, we do not have a procedure to find the best hitting set for $k$-cycles. However, by repeatedly taking all vertices of a cycle into the hitting set and removing them from the graph, we do have a simple greedy strategy that finds a $k$-approximate hitting set. We will use this strategy to give an algorithm that recovers the planted feedback vertex set.

\noindent {\bf Algorithm Recover-Planted-FVS($D_{n,\delta,p}=D(V,E)$)}
\begin{enumerate}
\item Obtain cycles in increasing order of size until all cycles of length $k$ are obtained. Let $\T'$ be the subset of cycles. Let $S$ be the empty set.
\item While there exists a cycle $T\in \T'$ such that $S$ does not hit $T$,
\begin{enumerate}
\item Add all vertices in $T$ to $S$.
\end{enumerate}
\item Return $H$, where $H=\{u\in S:\exists$ $k$-cycle through $v$ in the subgraph induced by $V\setminus S\cup\{u\}\}$.
\end{enumerate}

The idea behind the algorithm is the following: The set $S$ obtained at the end of step $2$ in the above algorithm is a $k$-approximate hitting set and hence is of size at most $k\delta n$. Using Lemma \ref{lemma:small-hitting-set-contains-P}, it is clear that $S$ contains $P$ - indeed, if $S$ does not contain all vertices in $P$, then $S$ should contain most of the vertices in $V\setminus P$ contradicting the fact that the size of $S$ is at most $k\delta n$. Further, owing to the choice of $\delta$, it can be shown that $S$ does not contain at least $|V\setminus P|/10$ vertices from $V\setminus P$.  Therefore, by Lemma \ref{lemma:cycle-through-every-vertex}, every vertex $v\in P$ induces a $k$-cycle with some subset of vertices from $V\setminus S$. Also, since $V\setminus P$ is a DAG no vertex $v\in V\setminus P$ induces cycles with any subset of vertices from $V\setminus S\subseteq V\setminus P$. Consequently, a vertex $v$ induces a $k$-cycle with vertices in $V\setminus S$ if and only if $v\in P$. Thus, the vertices in $P$ are identified exactly.

\begin{proof}[Proof of Theorem \ref{theorem:algorithm-planted-directedFVS}]
We use Algorithm Recover-Planted-FVS to recover the planted feedback vertex set from the given graph $D=D_{n,\delta,p}$. Since we are using the greedy strategy to obtain a hitting set $S$ for $\T'$, it is clear that $S$ is a $k$-approximate hitting set. Therefore $|S|\leq k\delta n$. It is easily verified that $t=k$ satisfies the conditions of Lemma \ref{lemma:small-hitting-set-contains-P} if $\delta\leq 1/2k$. Thus, all vertices from the planted feedback vertex set $P$ are present in the subset $S$ obtained at the end of step 2 in the algorithm.

By the choice of $\delta\leq 1/2k$, it is true that $|S|\leq k\delta n\leq 9(1-\delta)n/10=9|V\setminus P|/10$. Hence, $|V\setminus S|\geq |V\setminus P|/10$.

Since $S\supseteq P$, the subset of vertices $V\setminus S$ does not contain any vertices from the planted set. Also, the number of vertices in $V\setminus S$ is at least $|V\setminus P|/10$. Consequently, by Lemma \ref{lemma:cycle-through-every-vertex}, each vertex $v\in P$ induces at least one $k$-cycle with vertices in $V\setminus S$. Since $V\setminus P$ is a DAG, none of the vertices $u\in V\setminus P$ induce cycles with vertices in $V\setminus S$. Therefore, a vertex $v\in S$ induces a $k$-cycle with vertices in $V\setminus S$ if and only if $v\in P$. Hence, the subset $H$ output by Algorithm Recover-Planted-FVS is exactly the planted feedback vertex set $P$.

Next we prove that the algorithm runs in polynomial time in expectation. The following lemma shows an upper bound on the expected number of cycles of length $k$. It is proved later by a simple counting argument.

\begin{lemma}\label{lemma:expected-no-of-cycles}
The expected number of cycles of length $k$ in $D_{n,\delta,p}$ is at most $(nkp)^{k}$.
\end{lemma}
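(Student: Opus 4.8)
The plan is a direct first-moment computation. A cycle of length $k$ in $D_{n,\delta,p}$ is specified by a cyclic sequence of $k$ distinct vertices $(v_1,v_2,\ldots,v_k)$ together with the $k$ arcs $v_1\to v_2,\ v_2\to v_3,\ \ldots,\ v_k\to v_1$. The number of potential such cycles is at most $n^k$: choosing an ordered $k$-tuple of distinct vertices gives at most $n(n-1)\cdots(n-k+1)\le n^k$ objects, and this already over-counts each cycle (by the factor $k$ coming from the choice of starting vertex), so $n^k$ is a safe upper bound on the number of potential $k$-cycles.

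Next I would bound, for a fixed potential $k$-cycle $C$, the probability that all $k$ of its arcs are present. The key point from the definition of the planted model is that for every ordered pair of distinct vertices $(u,v)$ the arc $u\to v$ is present with probability at most $p$: if the unordered pair $\{u,v\}$ meets $P$, then $\{u,v\}$ is adjacent with probability $2p$ and, conditioned on that, oriented $u\to v$ with probability $1/2$, for a total of exactly $p$; if both $u,v\in V\setminus P$, then $u\to v$ is present with probability $p$ (when $u$ precedes $v$ in the fixed permutation) and $0$ otherwise. Moreover, the arc statuses of distinct \emph{unordered} pairs are mutually independent, since the model draws them independently (the $P$-incident pairs among themselves, and the pairs internal to $V\setminus P$ among themselves and of the former). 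A simple $k$-cycle with $k\ge 3$ uses $k$ pairwise-distinct unordered pairs, so the $k$ events ``the $i$-th arc of $C$ is present'' are independent, giving $\Pr[C\subseteq D_{n,\delta,p}]\le p^k$.

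Finally, by linearity of expectation, the expected number of $k$-cycles is at most $n^k\cdot p^k=(np)^k\le (nkp)^k$, which is the claimed bound (and, for $p\ge C/n^{1-2/k}$ with $k$ constant, this is $\mathrm{poly}(n)$, confirming that the collection $\T'$ in Algorithm Recover-Planted-FVS has polynomially bounded expected size). I do not expect any real obstacle: the only step requiring care is the arc bookkeeping in the second paragraph — checking that the $k$ arcs of a cycle come from $k$ distinct unordered pairs and that each arc indicator has probability at most $p$ and is independent across pairs under the model — after which the conclusion is a one-line application of linearity of expectation.
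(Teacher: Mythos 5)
Your proof is correct and takes essentially the same approach as the paper: a direct first-moment computation bounding the number of potential $k$-cycles and multiplying by the probability $p^k$ that all $k$ arcs are present. The only cosmetic difference is that the paper partitions the count according to how many cycle vertices lie in $P$ (using $\binom{\delta n}{i}\binom{(1-\delta)n}{k-i}k!\,p^k$) to reach $(nkp)^k$, whereas your uniform bound of $n^k$ ordered tuples gives the slightly cleaner $(np)^k\le (nkp)^k$; your explicit verification that each arc appears with probability at most $p$ and that distinct unordered pairs are independent is a detail the paper leaves implicit.
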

Since the expected number of cycles obtained by the algorithm is $(nkp)^{k}$ by Lemma \ref{lemma:expected-no-of-cycles}, the algorithm uses $(nkp)^k$-sized storage memory. Finally, since the size of $\T'$ is $(nkp)^k$, steps 2 and 3 of the algorithm can be implemented to run in expected $(nkp)^{O(k)}$ time.
\end{proof}

\section{Proofs}
\subsection{Lower Bound for FVS in Random Graphs}
In this section, we prove the lower bound for the Feedback Vertex Set in random graphs. We consider the dual problem - namely the maximum induced acyclic subgraph.

We will need the following bound on the number of ways to partition a positive integer $n$ into $k$ positive integers.
\begin{theorem}\label{theorem:partition-function}\cite{wladimir-partition-function}
Let $p_k(n)$ denote the number of ways to partition $n$ into exactly $k$ parts. Then there exists an absolute constant $A<1$ such that
\[
p_k(n)<A\frac{e^{c\sqrt{n-k}}}{(n-k)^{3/4}}e^{\frac{-2\sqrt{n-k}}{c}}L_2(e^{-\frac{c(k+1/2)}{2\sqrt{n-k}}})
\]
where $c=\pi\sqrt{2/3}$ and $L_2(x)=\sum_{m=1}^{\infty}\frac{x^m}{m^2}$ for $|x|\leq 1$.
\end{theorem}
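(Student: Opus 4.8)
\textbf{Proof proposal for Theorem~\ref{theorem:partition-function}.}
This is a quantitative Hardy--Ramanujan--type estimate, and the plan is to prove it by the saddle-point (circle) method applied to the generating function for partitions with a bounded number of parts. First I would pass from ``exactly $k$ parts'' to ``at most $k$ parts'': subtracting $1$ from every part of a partition of $n$ into exactly $k$ parts is a bijection onto partitions of $m:=n-k$ into at most $k$ parts, so $p_k(n)=q_k(m)$, where
\[
Q_k(x):=\sum_{m\ge 0}q_k(m)\,x^{m}=\prod_{j=1}^{k}\frac{1}{1-x^{j}} .
\]
Writing $q_k(m)=\frac{1}{2\pi i}\oint Q_k(x)\,x^{-m-1}\,dx$ over a circle $|x|=e^{-t}$ with $t>0$ small, the whole estimate hinges on understanding $\log\!\big(e^{mt}Q_k(e^{-t})\big)$ and choosing $t$ to make it as small as possible.

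The core computation is the small-$t$ behaviour of $\log Q_k(e^{-t})$. Expanding $-\log(1-e^{-jt})=\sum_{\ell\ge1}e^{-j\ell t}/\ell$ and summing the geometric series over $j=1,\dots,k$ gives
\[
\log Q_k(e^{-t})=\sum_{\ell\ge1}\frac{1}{\ell}\cdot\frac{e^{-\ell t}-e^{-\ell(k+1)t}}{1-e^{-\ell t}} .
\]
Now I would use $\dfrac{1}{1-e^{-u}}=\dfrac{e^{u/2}}{2\sinh(u/2)}=\dfrac{e^{u/2}}{u}\big(1+O(u^{2})\big)$, which turns the $\ell$-th summand into $\dfrac{e^{-\ell t/2}-e^{-\ell(k+1/2)t}}{\ell^{2}t}\big(1+O((\ell t)^{2})\big)$; here the symmetric shift is exactly what produces the ``$k+1/2$''. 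Summing over $\ell$ and using $L_2(e^{-t/2})=\frac{\pi^{2}}{6}-O\!\big(t\log(1/t)\big)$ yields
\[
\log Q_k(e^{-t})=\frac{1}{t}\Big(\frac{\pi^{2}}{6}-L_2\big(e^{-(k+1/2)t}\big)\Big)+O\!\big(\log(1/t)\big),
\]
where $\pi^{2}/6=c^{2}/4$. Adding $mt$ and minimizing $mt+\frac{c^{2}}{4t}$ over $t$ (the $L_2$ term being a lower-order perturbation) pins the saddle at $t^{*}=\dfrac{c}{2\sqrt m}$, with value $c\sqrt m-\dfrac{2\sqrt m}{c}\,L_2\!\big(e^{-c(k+1/2)/(2\sqrt m)}\big)+O(\log m)$, which matches the exponential part of the claimed bound with $m=n-k$.

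To upgrade the crude estimate $q_k(m)\le e^{mt^{*}}Q_k(e^{-t^{*}})$ (which loses a polynomial factor) to the precise prefactor $(n-k)^{-3/4}$ and a constant $A<1$, I would carry out the actual saddle-point evaluation of the contour integral: near $x=e^{-t^{*}}$ the exponent is approximately Gaussian with second derivative $\frac{d^{2}}{dt^{2}}\!\big(mt+\frac{c^{2}}{4t}\big)\big|_{t^{*}}=\frac{c^{2}}{2(t^{*})^{3}}=\Theta(m^{3/2})$, so the local Gaussian integral contributes $\Theta(m^{-3/4})$; the remainder of the circle is bounded crudely (this is the easy direction, since one only needs an upper bound and $|Q_k(e^{-t+i\theta})|$ decays away from $\theta=0$), and the subexponential $\log(1/t^{*})$ and $t^{*}/24$ errors together with the $\sqrt{c/(8\pi)}$-type Gaussian constant are collected to give $A<1$. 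The main obstacle I anticipate is the uniform bookkeeping in the middle step --- controlling the $O((\ell t)^{2})$ and $O(\log(1/t))$ errors uniformly in $k$, especially when $k$ grows with $m$ so that $e^{-(k+1/2)t^{*}}$ stays bounded away from $1$ and the dilogarithmic term is genuinely present --- while still landing on a clean constant strictly below $1$; apart from that, this is the classical Hardy--Ramanujan/Meinardus machinery.
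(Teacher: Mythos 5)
You should first note a mismatch with the paper itself: Theorem~\ref{theorem:partition-function} is not proved in this paper at all; it is quoted from the reference \cite{wladimir-partition-function}, and the authors immediately discard its precision (Remark~3), using only the weak consequence $p_k(n)<C_1e^{C_2(n-k)}$. So there is no internal proof to compare against; your proposal has to be judged as a proof of the cited inequality itself. For what the paper actually needs, your very first step already suffices: the bijection $p_k(n)=q_k(n-k)$ together with the trivial bound $q_k(m)\le e^{mt}\prod_{j=1}^{k}(1-e^{-jt})^{-1}$ at any fixed $t$ gives $p_k(n)\le e^{C(n-k)}$, which is all that Remark~3 and the proof of Theorem~\ref{theorem:undirected-FVSlowerbound} use.

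As a proof of the stated theorem, however, your sketch has a genuine gap exactly where the statement has content. The saddle-point computation of the exponential term is sound: the identity $\log Q_k(e^{-t})=\sum_{\ell\ge1}\frac{1}{\ell}\frac{e^{-\ell t}-e^{-\ell(k+1)t}}{1-e^{-\ell t}}$, the symmetrization producing $k+1/2$, and the choice $t^*=c/(2\sqrt{m})$ correctly yield the exponent $c\sqrt{m}-\frac{2\sqrt{m}}{c}L_2\bigl(e^{-c(k+1/2)/(2\sqrt{m})}\bigr)$ with $m=n-k$ (you are also right to read the $L_2$ factor as sitting inside the exponent, as in the source, rather than as the paper's display literally typesets it). But the theorem is an explicit inequality, valid with an absolute constant $A<1$ and the precise prefactor $(n-k)^{-3/4}$, uniformly in $n$ and $k$. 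Your error term $e^{O(\log(1/t^*))}$ is itself a polynomial factor in $m$, so as carried it would swamp the $m^{-3/4}$ you are trying to produce; the step where you "collect" the Gaussian constant, the minor-arc contribution, and the $O((\ell t)^2)$ and dilogarithm errors into a constant strictly below $1$, uniformly in $k$ (including $k$ of order $\sqrt{m}$ where the $L_2$ term is of the same order as the main term), is precisely the hard part of the theorem and is asserted rather than done. The cited source obtains the explicit bound by elementary monotonicity/smoothing manipulations of the generating function designed to produce clean constants, rather than by pushing the full Hardy--Ramanujan circle method to explicit uniform error bounds; if you want to complete your route you must either carry out that explicit bookkeeping or switch to such an elementary device for the prefactor.
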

\noindent {Remark 3.} Since we will not need such a tight bound, we will use $p_k(n)< C_1e^{C_2(n-k)}$ for some constants $C_1,C_2>0$.

We prove Theorem \ref{theorem:undirected-FVSlowerbound} now based on simple counting arguments. We observe that the proof of Theorem \ref{theorem:directed-FVSlowerbound} given by Spencer and Subramanian is also based on similar counting arguments while observing that if a directed graph is acyclic, then there exists an ordering of the vertices such that each arc is in the forward direction.
\begin{proof}[Proof of Theorem \ref{theorem:undirected-FVSlowerbound}]
First note that every induced subgraph on $r$ vertices is a graph from the family $G(r,p)$. We bound the probability that a graph $H=G(r,p)$ is a forest.\\

\begin{align*}
\prob{H\ \text{is a forest}}&\leq \sum_{k=1}^r \sum_{n_1+\cdots+n_k=r, n_i>0} \text{No. of forests with spanning}\\
&\ \ \ \ \ \text{trees on $n_1,\cdots,n_k$ vertices}\\
&\ \ \ \ \ \ \  \times \prob{\text{Forest with $k$ components}}\\
&= \sum_{k=1}^r \sum_{n_1+\cdots+n_k=r, n_i>0} \left(\frac{r!}{\prod_{i=1}^k n_i!}\right) \left(\prod_{i=1}^k n_i^{n_i-2} \right)\\
&\ \ \ \ \ \ \ \times p^{r-k}(1-p)^{\binom{r}{2}-r+k}\\
&\leq r!(1-p)^{\binom{r}{2}} \sum_{k=1}^r \sum_{n_1+\cdots+n_k=r, n_i>0} \left(\frac{p}{1-p}\right)^{r-k}\\
&\leq r!(1-p)^{\binom{r}{2}} \sum_{k=1}^r \sum_{n_1+\cdots+n_k=r, n_i>0} \left(2p\right)^{r-k} \\
& \ \ \ \quad \quad \text{(since $p<1/2$)}\\
&\leq r!(1-p)^{\binom{r}{2}} \sum_{k=1}^r \left(2p\right)^{r-k} \sum_{n_1+\cdots+n_k=r, n_i>0} 1\\
&= r!(1-p)^{\binom{r}{2}} \sum_{k=1}^r \left(2p\right)^{r-k} p_k(r)\\
&\leq r!(1-p)^{\binom{r}{2}} \sum_{k=1}^r \left(2p\right)^{r-k} C_1 e^{C_2(r-k)} \\
& \ \ \ \quad \quad \text{(by Remark 3)}\\
&\leq C_1r^r(1-p)^{\binom{r}{2}} \sum_{k=1}^r (2e^{C_2}p)^{r-k}
\end{align*}
\begin{align*}
&\leq C_1(1-p)^{\frac{r^2}{2}}n^r \sum_{k=1}^r(2e^{C_2}p)^{r-k} \\
& \ \ \ \quad \quad \text{(since $r\leq n$)}\\
&\leq C_1(1-p)^{\frac{r^2}{2}}r(2e^{C_2}np)^r \\
& \leq e^{-r\left(\frac{pr}{2}-\log{(2e^{C_2}np)}-\frac{\log{(C_1r)}}{r}\right)}\\
\end{align*}
which tends to zero when $r>\frac{2}{p}(\log{np})(1+o(1))$.
\end{proof}

\subsection{Feedback Vertex Set in Random Graphs}
We will use the following Chernoff bound for the concentration of the binomial distribution.
\begin{lemma} \label{lemma:chernoff}
Let $X=\sum_{i=1}^n X_i$ where $X_i$ are i.i.d. Bernoulli random variables with $\prob{X_i=1}=p$. Then
\[
\prob{|X-np|\geq a\sqrt{np}}\leq 2e^{-a^2/2}.
\]
\end{lemma}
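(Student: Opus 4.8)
The plan is to prove this by the exponential moment (Chernoff) method, bounding the upper and lower deviations separately and combining them with a union bound to account for the factor $2$. Write $\mu = np$. For the upper tail I would fix $\lambda > 0$ and apply Markov's inequality to the nonnegative variable $e^{\lambda X}$, giving $\prob{X \ge \mu + a\sqrt{\mu}} \le e^{-\lambda(\mu + a\sqrt{\mu})}\,\E[e^{\lambda X}]$. Independence of the $X_i$ factorizes the moment generating function, and $\E[e^{\lambda X_i}] = 1 - p + pe^{\lambda} = 1 + p(e^{\lambda}-1) \le \exp(p(e^{\lambda}-1))$, so $\E[e^{\lambda X}] \le \exp(\mu(e^{\lambda}-1))$. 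Substituting and optimizing the exponent $\mu(e^{\lambda}-1) - \lambda(\mu + a\sqrt{\mu})$ over $\lambda$ — the optimum is $e^{\lambda} = 1 + a/\sqrt{\mu}$ — yields a bound of the shape $\exp\bigl(-\mu\,\phi(a/\sqrt{\mu})\bigr)$ with $\phi(x) = (1+x)\ln(1+x) - x$, after which the elementary convexity estimate $\phi(x) \ge x^2/(2 + 2x/3)$ for $x \ge 0$ gives $\prob{X \ge \mu + a\sqrt{\mu}} \le \exp\bigl(-a^2/(2 + \tfrac{2a}{3\sqrt{\mu}})\bigr)$.

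The lower tail is handled identically with $\lambda < 0$ (equivalently, applying Markov to $e^{-\lambda X}$); the optimization now produces $\exp\bigl(-\mu\,\psi(a/\sqrt{\mu})\bigr)$ with $\psi(x) = x + (1-x)\ln(1-x)$, and here the cleaner estimate $\psi(x) \ge x^2/2$ — immediate from $\psi'(t) = -\ln(1-t) \ge t$ on $[0,1)$ and integrating — gives $\prob{X \le \mu - a\sqrt{\mu}} \le e^{-a^2/2}$ with no loss. Summing the two one-sided bounds produces the claimed $\prob{|X - \mu| \ge a\sqrt{\mu}} \le 2e^{-a^2/2}$.

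The only genuinely delicate point is the upper tail: the convexity estimate there gives $\exp\bigl(-a^2/(2 + 2a/(3\sqrt{np}))\bigr)$ rather than exactly $e^{-a^2/2}$, and indeed the clean form fails for $a$ very large relative to $\sqrt{np}$. In every invocation of this lemma in the paper, however, $np \to \infty$ while $a$ grows at most polylogarithmically in $n$, so the correction $2a/(3\sqrt{np})$ is $o(1)$ and the stated bound holds up to the $(1+o(1))$ slack that is understood throughout. Alternatively, one may simply cite the standard Chernoff--Hoeffding inequalities for the binomial distribution in this form and skip the derivation entirely.
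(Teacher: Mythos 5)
The paper does not prove this lemma at all---it is invoked as a standard Chernoff--Hoeffding bound for the binomial distribution---so there is no ``paper route'' to compare against; your exponential-moment derivation is exactly the standard argument that such a citation presupposes, and it is correct. Your one caveat is genuine and worth keeping: the lower tail does give $e^{-a^2/2}$ cleanly (via $\psi(x)\ge x^2/2$), but the upper tail only gives $\exp\bigl(-a^2/(2+2a/(3\sqrt{np}))\bigr)$, and the two-sided statement with the clean constant is in fact false in general (e.g.\ $np=O(1)$ and $a$ large: the binomial upper tail then decays like $e^{-\Theta(a\log a)}$, which exceeds $2e^{-a^2/2}$). So, strictly speaking, the imprecision lies in the lemma as stated rather than in your proof. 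In all of the paper's invocations the deviation parameter is $a=\sqrt{4\ln 8(t+1)}=O(\sqrt{\ln\ln n})$ while the relevant means ($u_t(1-p)^{l_t}$, $q_tu_t$) are large---the paper itself only ever uses the consequences $4\ln 8(t+1)/(u_t(1-p)^{l_t})\le 10\ln\ln n/n$ and $4\ln 8(t+1)/(q_tu_t)\le 400/c$ with $c=np$ large---so the correction factor $1+a/(3\sqrt{\mu})$ is $1+o(1)$ and your version of the bound suffices for every step where Lemma \ref{lemma:chernoff} is applied. In short: correct proof, same (standard) approach the paper implicitly relies on, with a legitimate observation that the clean constant requires $a=o(\sqrt{np})$, which holds throughout the paper.
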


\begin{proof}[Proof of Lemma \ref{lemma:largesurvivors}]
We prove the lemma by induction on $t$. We will prove the stronger induction hypothesis that every $l_i$, $u_i$ for $i\in\{0,1,\cdots,t\}$ satisfy their respective concentration bounds with probability at least
\[
a_t:= 1-\frac{t}{16T} -\frac{1}{16}\sum_{i=1}^{t}1/i^2.
\]
We will prove the concentration of $r_{i+1}$ as a consequence of $l_i$ and $u_i$ satisfying their respective concentration bounds. We will in fact show that the failure probability of $r_{i+1}$ satisfying its concentration bound conditioned on $l_i$ and $u_i$ satisfying their respective concentration bounds will be at most $1/(32(i+1)^2)$. It immediately follows that with failure probability at most $(t/16T)+(3/32)\sum_{i=1}^{t}(1/i^2)+(1/32(t+1)^2)\leq 1/4$, every $r_{i+1}$, $u_i$ and $l_i$, for $i\in\{0,1,\cdots,t\}$ satisfies its respective concentration bound leading to the conclusion of the lemma.

For the base case, consider $t=0$. It is clear that $u_0=n-1$ and $l_0=1$ satisfy the concentration bounds with probability $1$. For the induction step, the induction hypothesis is the following: With probability at least $a_t$, the concentration bounds are satisfied for $u_i$ and $l_i$ for every $i\in \{0,1,\cdots,t\}$. We will bound the probability that $u_{t+1}$ or $l_{t+1}$ fails to satisfy its corresponding concentration bound conditioned on the event that $u_i,l_i$ for $i\in\{0,1,\cdots,t\}$ satisfy their respective concentration bounds.

\noindent 1. To prove the concentration bound for $u_{t+1}$, observe that $u_{t+1}$ is a binomial distribution with $u_t$ trials and success probability $(1-p)^{l_t}$. Indeed, $u_{t+1}$ is the number of vertices among $U_t$ which are not neighbors of vertices in $L_t$. For each vertex $x\in U_t$, $\prob{\text{$x$ has no neighbor in $L_t$}}=(1-p)^{l_t}$.

Therefore, by Lemma \ref{lemma:chernoff}, we have that
$\prob{|u_{t+1}-u_t(1-p)^{l_t}|>\gamma_{t+1}\sqrt{u_t(1-p)^{l_t}}}$
\[
\leq 2e^{-\gamma_{t+1}^2/2}=\frac{1}{32(t+1)^2}
\]
with $\gamma_{t+1}=\sqrt{4\ln{8(t+1)}}$. Hence, with probability at least $1-(1/32(t+1)^2)$,
\begin{align*}
u_{t+1} &\leq u_t(1-p)^{l_t}\left(1+\sqrt{\frac{4\ln{8(t+1)}}{u_t(1-p)^{l_t}}}\right),\\
u_{t+1} &\geq u_t(1-p)^{l_t}\left(1-\sqrt{\frac{4\ln{8(t+1)}}{u_t(1-p)^{l_t}}}\right).
\end{align*}

Now, using the bounds on $u_t$ and $l_t$, 
\begin{align*}
\frac{4\ln{8(t+1)}}{u_t(1-p)^{l_t}} & \leq \frac{10\ln{\ln{n}}}{n}
\end{align*}
since $t+1\leq T\leq \ln{n}$,
\begin{align*}
(n-\sum_{i=0}^t (c+20\sqrt{c})^i) &\geq \frac{15n}{16},\\
(1-p(c+20\sqrt{c})^t) &\geq \frac{15}{16} \quad \text{and}\\
\left(1-\sqrt{\frac{\ln{\ln{n}}}{n}}\right) &\geq \frac{1}{2}.
\end{align*}

Hence, 
\begin{align}
u_{t+1} &\leq u_t(1-p)^{l_t}\left(1+\sqrt{\frac{\ln{\ln{n}}}{n}}\right) \label{ineq:u-upperbound}\\
u_{t+1} &\geq u_t(1-p)^{l_t}\left(1-\sqrt{\frac{\ln{\ln{n}}}{n}}\right) \label{ineq:u-lowerbound}.
\end{align}

Therefore, 
\begin{align*}
u_{t+1} &\geq u_t(1-p)^{l_t}\left(1-\sqrt{\frac{\ln{\ln{n}}}{n}}\right)\\
&\ \ \ \quad \quad \text{(Using inequality \ref{ineq:u-lowerbound})}\\
&\geq u_t(1-l_tp)\left(1-\sqrt{\frac{\ln{\ln{n}}}{n}}\right)\\
&\geq (n-\sum_{i=0}^t (c+20\sqrt{c})^i)\left(1-\frac{c(c+20\sqrt{c})^t}{n}\right)\\
&\ \ \times \left(1-\sqrt{\frac{\ln{\ln{n}}}{n}}\right)\\
&\quad \quad \text{(Using the bounds on $u_t$ and $l_t$)}\\
&\geq (n-\sum_{i=0}^t (c+20\sqrt{c})^i)\left(1-\frac{(c+20\sqrt{c})^{t+1}}{n}\right)\\
&\ \ \times \left(1-\sqrt{\frac{\ln{\ln{n}}}{n}}\right)\\
&= \left(n-\sum_{i=0}^t (c+20\sqrt{c})^i - (c+20\sqrt{c})^{t+1}\right. \\
&\ \ + \left.\frac{(c+20\sqrt{c})^{t+1}}{n}\sum_{i=0}^t (c+20\sqrt{c})^i\right)\\
&\ \ \ \ \times \left(1-\sqrt{\frac{\ln{\ln{n}}}{n}}\right)
\end{align*}
\begin{align*}
&\geq \left(n- \sum_{i=0}^{t+1} (c+20\sqrt{c})^i\right)\left(1-\sqrt{\frac{\ln{\ln{n}}}{n}}\right)
\end{align*}
which proves the lower bound. The upper bound is obtained by proceeding similarly:
\begin{align*}
u_{t+1} &\leq u_t(1-p)^{l_t}\left(1+\sqrt{\frac{\ln{\ln{n}}}{n}}\right) \\
& \ \ \quad \quad \text{(Using inequality \ref{ineq:u-upperbound})}\\
&\leq u_t\left(1-\frac{l_tp}{2}\right)\left(1+\sqrt{\frac{\ln{\ln{n}}}{n}}\right)\\
&\leq u_t\left(1-\frac{c(c-20\sqrt{c})^t}{n}(1-16Tp(c+20\sqrt{c})^t)\right.\\
&\ \ \ \left.\left(1-\frac{\sum_{i=0}^t(c+20\sqrt{c})^i}{n}\right)\right)\left(1+\sqrt{\frac{\ln{\ln{n}}}{n}}\right)\\
&\quad \quad \text{(Using the bound on $l_t$)}\\
&\leq u_t\left(1-\frac{c(c-20\sqrt{c})^t}{4n}\right)\left(1+\sqrt{\frac{\ln{\ln{n}}}{n}}\right) \\
& \quad \quad \left(\text{Since $(1-16Tp(c+20\sqrt{c})^t)\geq \frac{1}{2}$,}\right.\\
&\ \ \quad \quad \left.\text{$\left(1-\frac{\sum_{i=0}^t(c+20\sqrt{c})^i}{n}\right)\geq \frac{15}{16}$}\right)\\
&\leq \left(n-\frac{\sum_{i=0}^t(c-20\sqrt{c})^i}{4n}\right)\left(1-\frac{c(c-20\sqrt{c})^t}{4n}\right)\\
&\ \ \ \times \left(1+\sqrt{\frac{\ln{\ln{n}}}{n}}\right)\\
&\leq \left(n-\frac{\sum_{i=0}^t(c-20\sqrt{c})^i}{4n}\right)\left(1-\frac{(c-20\sqrt{c})^{t+1}}{4n}\right)\\
&\ \ \ \times \left(1+\sqrt{\frac{\ln{\ln{n}}}{n}}\right)\\
&\leq \left(n-\frac{\sum_{i=0}^{t+1}(c-20\sqrt{c})^i}{4n}\right)\left(1+\sqrt{\frac{\ln{\ln{n}}}{n}}\right).
\end{align*}

Thus, $u_{t+1}$ satisfies the concentration bound with failure probability at most $1/(32(t+1)^2)$ conditioned on the event that $u_i,l_i$ for $i\in\{0,1,\cdots,t\}$ satisfy their respective concentration bounds.

\noindent 2. Next we address the failure probability of $r_{t+1}$ not satisfying its concentration bound conditioned on the event that $u_i,l_i$ for $i\in\{0,1,\cdots,t\}$ satisfy their respective concentration bounds. Lemma \ref{lemma:lowerbound-R} proves that the number of {\bf \emph{unique}} neighbors $r_{t+1}$ is concentrated around its expectation.

\begin{lemma}\label{lemma:lowerbound-R}
Let $q_t:=pl_t(1-p)^{l_t-1}$. With probability at least $1-(1/32(t+1)^2)$ 
\begin{align*}
q_tu_t\left(1+\frac{20}{\sqrt{c}}\right) \geq r_{t+1} \geq q_tu_t\left(1-\frac{20}{\sqrt{c}}\right)\\
\end{align*}
when $t+1\leq T$.
\end{lemma}
\begin{proof}[Proof of Lemma \ref{lemma:lowerbound-R}]
Observe that $r_{t+1}$ is a binomially distributed random variable with $u_t$ trials and success probability $q_t$. Indeed, $r_{t+1}$ is the number of vertices among $U_t$ which are adjacent to exactly one vertex in $L_t$. For each $u\in U_t$, $\prob{\text{$u$ is adjacent to exactly one vertex in $L_t$} }=pl_t(1-p)^{l_t-1}=q_t$.

Using $\beta_{t+1}=\sqrt{4\ln{8(t+1)}}$, by Lemma \ref{lemma:chernoff}, we have that
$\prob{|r_{t+1}-q_tu_t|>\beta_{t+1}\sqrt{q_tu_t}}$
\begin{align*}
&\leq 2e^{-\beta_{t+1}^2/2}=\frac{1}{32(t+1)^2}.
\end{align*}
Hence, with probability at least $1-(1/32(t+1)^2)$,
\begin{align}
r_{t+1} &\leq q_tu_t\left(1+\sqrt{\frac{4\ln{8(t+1)}}{q_tu_t}}\right) \label{ineq:r-lowerbound}\\
r_{t+1} &\geq \geq q_tu_t\left(1-\sqrt{\frac{4\ln{8(t+1)}}{q_tu_t}}\right).  \label{ineq:r-upperbound}
\end{align}

Lemma \ref{lemma:lowerbound-qu} proves the concentration of the expected number of unique neighbors of $L_t$ conditioned on the event that $u_i,l_i$ for $i\in\{0,1,\cdots,t\}$ satisfy their respective concentration bounds. This in turn helps in proving that $r_{t+1}$ is concentrated.
\begin{lemma}\label{lemma:lowerbound-qu}
For $t+1\leq T$, if $u_t$ and $l_t$ satisfy their respective concentration bounds, then
\begin{enumerate}
\item $q_tu_t \leq c(c+20\sqrt{c})^{t}\left(1+\sqrt{\frac{\ln{\ln{n}}}{n}}\right)$,
\item $q_tu_t$\\
$\geq \frac{c(c-20\sqrt{c})^{t}}{4} \left(1-\frac{\sum_{i=0}^{t+1}(c+20\sqrt{c})^i}{n}\right)\left(1-\sqrt{\frac{\ln{\ln{n}}}{n}}\right)$.
\end{enumerate}
\end{lemma}
\begin{proof}[Proof of Lemma \ref{lemma:lowerbound-qu}]
Recall that $q_t=pl_t(1-p)^{l_t-1}$. Hence, 
\begin{align*}
q_tu_t &\geq p(n-\sum_{i=0}^t(c+20\sqrt{c})^i)l_t(1-p)^{l_t-1}\\
&\ \ \times\left(1-\sqrt{\frac{\ln{\ln{n}}}{n}}\right)\\
 &= pn\left(1-\frac{\sum_{i=0}^t(c+20\sqrt{c})^i}{n}\right)l_t(1-p)^{l_t-1}\\
 &\ \ \ \times \left(1-\sqrt{\frac{\ln{\ln{n}}}{n}}\right)\\
 &\geq c\left(1-\frac{\sum_{i=0}^t(c+20\sqrt{c})^i}{n}\right)l_t(1-l_tp)\\
 &\ \ \ \times \left(1-\sqrt{\frac{\ln{\ln{n}}}{n}}\right)\\
 &\geq c(c-20\sqrt{c})^{t}\left(1-\frac{\sum_{i=0}^t(c+20\sqrt{c})^i}{n}\right)^2\\
 &\ \ \ \times(1-16Tp(c+20\sqrt{c})^t)(1-p(c+20\sqrt{c})^t)\\
 &\ \ \ \times\left(1-\sqrt{\frac{\ln{\ln{n}}}{n}}\right)\\
&\quad \quad \text{(By the bound on $l_t$)}\\
&\geq \frac{c(c-20\sqrt{c})^{t}}{4} \left(1-\frac{\sum_{i=0}^{t+1}(c+20\sqrt{c})^i}{n}\right)\\
 &\ \ \ \times\left(1-\sqrt{\frac{\ln{\ln{n}}}{n}}\right)
\end{align*}
using Lemma \ref{lemma:handler1} and
\begin{align*}
(1-16Tp(c+20\sqrt{c})^t) &\geq \frac{1}{2},\\
(1-p(c+20\sqrt{c})^t) &\geq \frac{1}{2} \quad \quad \text{when $t+1\leq T$.}
\end{align*}
For the upper bound:
\begin{align*}
q_tu_t &=pl_t(1-p)^{l_t-1}u_t\\
&\leq pl_tu_t\\
&\leq pl_t\left(n-\frac{\sum_{i=0}^t(c-20\sqrt{c})^i}{4}\right)\left(1+\sqrt{\frac{\ln{\ln{n}}}{n}}\right)\\
&\ \ \ \ \quad \quad \text{(Using the bound on $u_t$)}\\
&\leq cl_t\left(1-\frac{\sum_{i=0}^t(c-20\sqrt{c})^i}{4n}\right)\left(1+\sqrt{\frac{\ln{\ln{n}}}{n}}\right)\\
\end{align*}
\begin{align*}
&\leq c(c+20\sqrt{c})^t\left(1-\frac{\sum_{i=0}^t(c-20\sqrt{c})^i}{4n}\right)\\
&\ \ \ \times \left(1+\sqrt{\frac{\ln{\ln{n}}}{n}}\right)\\
&\ \ \ \ \quad \quad \text{(Using the bound on $l_t$)}\\
&\leq c(c+20\sqrt{c})^t\left(1+\sqrt{\frac{\ln{\ln{n}}}{n}}\right) \\
&\ \quad \quad \left(\text{Since $\left(1-\frac{\sum_{i=0}^t(c-20\sqrt{c})^i}{4n}\right)\leq 1$}\right).
\end{align*}
\end{proof}

Consequently, using Lemma \ref{lemma:lowerbound-qu}, 
\begin{align*}
\frac{4\ln{8(t+1)}}{q_tu_t} &\leq \frac{400}{c}
\end{align*}
since, when $t+1\leq T$,
\begin{align*}
\left(1-\frac{\sum_{i=0}^{t+1}(c+20\sqrt{c})^i}{n}\right) &\geq \left(\frac{15}{16}\right)^2,\\
\left(1-\sqrt{\frac{\ln{\ln{n}}}{n}}\right) &\geq \frac{1}{2} \quad \text{and}\\
\frac{1}{2} &\geq \frac{4\ln{8(t+1)}}{(c-20\sqrt{c})^{t}}.
\end{align*}

Hence, by inequalities \ref{ineq:r-lowerbound} and \ref{ineq:r-upperbound}, with probability at least $1-(1/32(t+1)^2)$, 
\begin{align}
q_tu_t\left(1+\frac{20}{\sqrt{c}}\right) \geq r_{t+1} \geq q_tu_t\left(1-\frac{20}{\sqrt{c}}\right)
\end{align}
when $t+1\leq T$. This concludes the proof of Lemma \ref{lemma:lowerbound-R}
\end{proof}
Lemmas \ref{lemma:lowerbound-R} and \ref{lemma:lowerbound-qu} together show that $r_{t+1}$ satisfies the concentration bounds with failure probability at most $(1/32(t+1)^2)$ conditioned on the event that $u_t$ and $l_t$ satisfy their respective concentration bounds.

\noindent 3. Finally we address the failure probability of $l_{t+1}$ satisfying its concentration bound conditioned on the event that $u_i,l_i$ for $i\in\{0,1,\cdots,t\}$ satisfy their respective concentration bounds. By Step 2(e) of the algorithm, the number of surviving vertices in level $t+1$ is $l_{t+1}:=r_{t+1} - m_{t+1}$, where $m_{t+1}$ denotes the number of edges among the vertices in $R_{t+1}$. In Lemma \ref{lemma:lowerbound-R}, we showed that the number of {\bf \emph{unique}} neighbors $r_{t+1}$ is concentrated around its expectation. Lemma \ref{lemma:upperbound-M} proves a concentration which bounds the number of edges among the vertices in $R_t$. These two bounds will immediately lead to the induction step on $l_{t+1}$. Thus, the probability that $l_{t+1}$ does not satisfy its concentration bound will at most be the probability that either $m_{t+1}$ or $r_{t+1}$ does not satisfy its respective concentration bound.

\begin{lemma}\label{lemma:upperbound-M}
$m_{t+1}\leq 8Tr_{t+1}^2p$ with probability at least $1-(1/16T)$.
\end{lemma}

\begin{proof}[Proof of Lemma \ref{lemma:upperbound-M}]
Recall that $m_{t+1}$ denotes the number of edges among the vertices in $R_{t+1}$. Since the algorithm has not explored the edges among the vertices in $R_{t+1}$, $m_{t+1}$ is a random variable following the Binomial distribution with $\binom{r_{t+1}}{2}$ trials and success probability $p$. By Markov's inequality, we have that for $t+1\leq T$,
\[
\prob{m_{t+1}\geq 8Tr_{t+1}^2p}\leq \frac{1}{16T}.
\]
Hence, $m_{t+1}\leq 8Tr_{t+1}^2p$ with probability at least $1-(1/16T)$, .
\end{proof}

Recollect that $l_{t+1}=r_{t+1}-m_{t+1}$. The upper bound of the induction step follows using Lemma \ref{lemma:lowerbound-qu}:
\begin{align*}
l_{t+1} &\leq r_{t+1}\\
&\leq q_tu_t\left(1+\frac{20}{\sqrt{c}}\right)\\
&\leq c(c+20\sqrt{c})^{t}\left(1+\sqrt{\frac{\ln{\ln{n}}}{n}}\right)\left(1+\frac{20}{\sqrt{c}}\right)\\
&\leq (c+20\sqrt{c})^{t+1} \left(1+\sqrt{\frac{\ln{\ln{n}}}{n}}\right).
\end{align*}

For the lower bound, we use Lemmas \ref{lemma:lowerbound-R} and \ref{lemma:upperbound-M} conditioned on the event that $l_t$ and $u_t$ satisfy their respective concentration bounds. With failure probability at most
\[
\frac{1}{32(t+1)^2}+\frac{1}{16T},
\]
we have that $l_{t+1}$
\begin{align*}
&=r_{t+1}-m_{t+1}\\
&\geq r_{t+1} - 8Tr_{t+1}^2p\\
&= r_{t+1}(1-8Tr_{t+1}p)\\
&\geq q_tu_t\left(1-\frac{20}{\sqrt{c}}\right)\left(1-8Tq_tu_tp\left(1+\frac{20}{\sqrt{c}}\right)\right)\\
&\quad \quad \text{(Using Lemma \ref{lemma:lowerbound-R})}\\
&= l_tp(1-p)^{l_t-1}u_t\left(1-8Tl_tp^2(1-p)^{l_t-1}u_t\right.\\
&\ \ \left.\times\left(1+\frac{20}{\sqrt{c}}\right)\right)\left(1-\frac{20}{\sqrt{c}}\right)\\
& \quad \quad \text{(Substituting for $q_t=pl_t(1-p)^{l_t-1}$)}\\
& \geq l_tp\left(1-\frac{20}{\sqrt{c}}\right)(1-l_tp)(1-12Tl_tp^2(1-p)^{l_t-1}u_t)\\
\end{align*}
\begin{align*}
&\geq l_tp\left(1-\frac{20}{\sqrt{c}}\right)(1-l_tp)(1-12Tnp^2l_t(1-p)^{l_t-1})\\
& \quad \quad \text{(Since $u_t\leq n$)}\\
&\geq l_tp\left(1-\frac{20}{\sqrt{c}}\right)(1-l_tp)(1-12Tcpl_t(1-p)^{l_t-1})\\
&\geq l_tp\left(1-\frac{20}{\sqrt{c}}\right)(1-l_tp-12Tcpl_t(1-p)^{l_t}(1-l_tp))\\
&\geq l_tp\left(1-\frac{20}{\sqrt{c}}\right)(1-l_tp(1+12Tc))\\
&\geq l_tpu_t(1-l_tp(1+12Tc))\left(1-\frac{20}{\sqrt{c}}\right)\\
&\geq l_tp(n-\sum_{i=0}^t {(c+20\sqrt{c})^i})(1-l_tp(1+12Tc))\\
&\ \ \ \times \left(1-\frac{20}{\sqrt{c}}\right) 
 \quad \quad \text{(Using the bound on $u_t$)}\\
&\geq l_tnp\left(1-\frac{\sum_{i=0}^t {(c+20\sqrt{c})^i}}{n}\right)(1-l_tp(1+12Tc))\\
&\ \ \ \times \left(1-\frac{20}{\sqrt{c}}\right)\\
&= l_tc\left(1-\frac{\sum_{i=0}^t {(c+20\sqrt{c})^i}}{n}\right)(1-l_tp(1+12Tc))\\
&\ \ \ \times\left(1-\frac{20}{\sqrt{c}}\right)\\
&\geq c(c-20\sqrt{c})^{t}\left(1-\frac{\sum_{i=0}^t {(c+20\sqrt{c})^i}}{n}\right)^2\\
&\ \ \ \times(1-16Tp(c+20\sqrt{c})^t)\\
&\quad \quad \times (1-(c+20\sqrt{c})^tp(1+12Tc))\left(1-\frac{20}{\sqrt{c}}\right)\\
&\ \ \ \quad \text{(using the bound on $l_t$)}\\
&\geq (c-20\sqrt{c})^{t+1}\left(1-\frac{\sum_{i=0}^{t+1} {(c+20\sqrt{c})^i}}{n}\right)\\
&\ \ \ \times(1-16Tp(c+20\sqrt{c})^{t+1}) \quad \quad \text{(Using Lemma \ref{lemma:handler1})}
\end{align*}
proving the induction step of the lower bound for $l_{t+1}$.

Thus, $l_{t+1}$ satisfies the concentration bounds with failure probability at most $(1/32(t+1)^2)+(1/16T)$ conditioned on the event that $u_i,l_i$ for $i\in\{0,1,\cdots,t\}$ satisfy their respective concentration bounds.

Finally, by the union bound, with probability at most $\frac{1}{32(t+1)^2}+\frac{1}{32(t+1)^2}+\frac{1}{16T}$, either $u_{t+1}$ or $l_{t+1}$ does not satisfy its respective concentration bounds conditioned on the event that $u_i,l_i$ for $i\in\{0,1,\cdots,t\}$ satisfy their respective concentration bounds. By induction hypothesis, the failure probability of some $u_i,l_i$ for $i\in\{0,1,\cdots,t\}$ not satisfying their respective concentration bound is at most $1-a_t$. Hence, the probability that  $u_i,l_i$ satisfy their respective concentration bound for every $i\in \{0,1,\cdots,t+1\}$ is at least $a_t(1-(1/16(t+1)^2)-(1/16T))\geq a_{t+1}$. Therefore, with probability at least $a_{t+1}$, every $u_i,l_i$ for $i\in\{0,1,\cdots,t+1\}$ satisfy their respective concentration bounds. This proves the stronger induction hypothesis.

To complete the proof of Lemma \ref{lemma:largesurvivors}, recollect that we showed that the failure probability of $r_{i+1}$ satisfying its concentration bound conditioned on $l_i$ and $u_i$ satisfying their respective concentration bounds is at most $1/(32(i+1)^2)$. By the union bound argument, it immediately follows that with failure probability at most $(t/16T)+(3/32)\sum_{i=1}^{t}(1/i^2)+(1/32(t+1)^2)\leq 1/4$, every $r_{i+1}$, $u_i$ and $l_i$, for $i\in\{0,1,\cdots,t\}$ satisfies its respective concentration bound. This concludes the proof of Lemma \ref{lemma:largesurvivors}.
\end{proof}

\begin{lemma} \label{lemma:handler1}
For $t+1\leq T$,
\begin{enumerate}
\item
\[
\left(1-\frac{\sum_{i=0}^t (c+20\sqrt{c})^i}{n}\right)^{2} \geq 1-\frac{\sum_{i=0}^{t+1} (c+20\sqrt{c})^i}{n}
\]
\item $\left(1-16Tp(c+20\sqrt{c})^t\right)(1-(c+20\sqrt{c})^tp(1+12Tc))$
\begin{align*}
&\geq \left(1-16Tp(c+20\sqrt{c})^{t+1}\right)
\end{align*}
\end{enumerate}
\end{lemma}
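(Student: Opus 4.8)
The plan is to prove both inequalities by elementary algebraic manipulation, using only the standing assumption (invoked throughout the analysis) that $c=np$ is sufficiently large, together with $T\geq 1$. Throughout, write $a:=c+20\sqrt{c}$ and $S_j:=\sum_{i=0}^{j}a^i$.

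For part 1, I would start from the trivial bound $(1-x)^2\geq 1-2x$ (which is just $x^2\geq 0$) applied with $x=S_t/n$. It then suffices to show $1-2S_t/n\geq 1-S_{t+1}/n$, i.e. $2S_t\leq S_{t+1}$. Since $S_{t+1}=S_t+a^{t+1}$, this is the same as $S_t\leq a^{t+1}$, and for $c$ large enough $a\geq 2$, so
\[
S_t=\frac{a^{t+1}-1}{a-1}\leq a^{t+1},
\]
which completes this part. Note the quadratic term is simply discarded and is not needed.

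For part 2, I would set $b:=a^{t}p$, so $a^{t+1}p=ab$, and expand the left-hand side:
\[
(1-16Tb)\bigl(1-b(1+12Tc)\bigr)=1-b(1+12Tc)-16Tb+16Tb^{2}(1+12Tc).
\]
Subtracting the right-hand side $1-16Tab$ and factoring out $b\geq 0$, the claimed inequality is equivalent to
\[
16Ta-16T-12Tc-1+16Tb(1+12Tc)\ \geq\ 0.
\]
Since the last summand is nonnegative it suffices to check $16Ta-16T-12Tc-1\geq 0$; dividing by $T\geq 1$ this reads $16a\geq 16+12c+1/T$, and substituting $a=c+20\sqrt{c}$ it becomes $4c+320\sqrt{c}\geq 16+1/T$, which holds for all sufficiently large $c$ (already for $c\geq 4$, using $T\geq 1$).

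I do not anticipate any genuine obstacle here: the only care required is tracking signs when expanding the product in part 2 so that dropping the term $16Tb(1+12Tc)$ moves the inequality in the correct direction, and recalling that $c$ is large enough to ensure both $a\geq 2$ and $4c+320\sqrt{c}\geq 16+1/T$. As an aside, for $t+1\leq T$ the defining inequality of $T$ gives $16Tb=16Tp(c+20\sqrt{c})^{t}\leq 16Tp(c+20\sqrt{c})^{T-1}\leq 1/2$, so all factors appearing in the statement are in fact positive; this is reassuring but is not used in the argument above.
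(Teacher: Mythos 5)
Your proof is correct, but it is organized differently from the paper's in part 1, so a brief comparison is worthwhile. For part 1 the paper argues by induction on $t$: it verifies a base case (reducing to $n-1\leq (c+20\sqrt{c})n$) and then expands the square in the induction step, arriving at a sufficient condition of the form $c+20\sqrt{c}\geq 2-\text{(small terms)}$, valid for large $c$. Your route is non-inductive and cleaner: discard the square via $(1-x)^2\geq 1-2x$ and reduce to $2\sum_{i=0}^{t}a^i\leq \sum_{i=0}^{t+1}a^i$, i.e.\ $\sum_{i=0}^{t}a^i\leq a^{t+1}$, which follows from the geometric-sum bound once $a=c+20\sqrt{c}\geq 2$ — the same largeness-of-$c$ assumption the paper uses, but with the bookkeeping of the inductive expansion eliminated. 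For part 2 your argument is essentially the paper's: the paper expands the product and reduces the claim to $(1-16Tp(c+20\sqrt{c})^t)(1+12Tc)\leq 16T(c+20\sqrt{c}-1)$, then bounds the first factor by $1$; your factoring out of $b=a^tp$ and discarding of the nonnegative term $16Tb(1+12Tc)$ yields exactly the same sufficient condition $16T(c+20\sqrt{c}-1)\geq 1+12Tc$, which you verify explicitly (using $T\geq 1$, guaranteed by the hypothesis $t+1\leq T$ with $t\geq 0$) rather than with the paper's ``true for large $c$'' remark. Net effect: your write-up is a touch more elementary and self-contained, at no loss of generality; the paper's inductive phrasing of part 1 buys nothing extra here.
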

\begin{proof} [Proof of Lemma \ref{lemma:handler1}]
We prove the first part of the Lemma by induction. For the base case, we need to prove that
\begin{align*}
1+\frac{1}{n^2}-\frac{2}{n}&\geq 1-\frac{c+20\sqrt{c}}{n}-\frac{1}{n}\\
\text{i.e., to prove that}\quad n-1&\leq (c+20\sqrt{c})n
\end{align*}
which is true.
For the induction step, we need to prove that
\[
\left(1-\frac{\sum_{i=0}^t(c+20\sqrt{c})^i}{n}-\frac{(c+20\sqrt{c})^{t+1}}{n}\right)^2
\]
\[
\geq 1-\frac{\sum_{i=0}^{t+2}(c+20\sqrt{c})^i}{n}
\]
Now, LHS
\begin{align*}
&= \left(1-\frac{\sum_{i=0}^{t}(c+20\sqrt{c})^i}{n}\right)^2 + \frac{(c+20\sqrt{c})^{2t+2}}{n^2}\\
&\ \ \ -\frac{2(c+20\sqrt{c})^{t+1}}{n}\left(1-\frac{\sum_{i=0}^{t}(c+20\sqrt{c})^i}{n}\right)\\
&\geq 1-\frac{\sum_{i=0}^{t+1} (c+20\sqrt{c})^i}{n} + \frac{(c+20\sqrt{c})^{2t+2}}{n^2}\\
&\ \ \ -\frac{2(c+20\sqrt{c})^{t+1}}{n}+\frac{2(c+20\sqrt{c})^{t+1}\sum_{i=0}^{t}(c+20\sqrt{c})^i}{n^2}.
\end{align*}
Hence, it is sufficient to prove that
\begin{align*}
-\frac{(c+20\sqrt{c})^{t+2}}{n} &\leq \frac{(c+20\sqrt{c})^{2t+2}}{n^2}-\frac{2(c+20\sqrt{c})^{t+1}}{n}\\
&\ \ \ +\frac{2(c+20\sqrt{c})^{t+1}\sum_{i=0}^{t}(c+20\sqrt{c})^i}{n^2} \\
(c+20\sqrt{c}) &\geq 2 - \frac{(c+20\sqrt{c})^{t+1}}{n}\\
&\ \ \ - 2\frac{\sum_{i=0}^{t}(c+20\sqrt{c})^i}{n},
\end{align*}
which is true for large enough $c$ when $t+1\leq T$.

For the second part of the Lemma, we need to prove that
$\left(1-16Tp(c+20\sqrt{c})^t\right)(1-(c+20\sqrt{c})^tp(1+12Tc))$
\begin{align*}
&\geq \left(1-16Tp(c+20\sqrt{c})^{t+1}\right)
\end{align*}
i.e., $1-16Tp(c+20\sqrt{c})^t-(c+20\sqrt{c})^tp(1+12Tc)+18Tp^2(c+20\sqrt{c})^{2t}(1+12Tc)$
\begin{align*}
&\geq 1-16Tp(c+20\sqrt{c})^{t+1}
\end{align*}
i.e.,
\[
(1-16Tp(c+20\sqrt{c})^t)(1+12Tc)\leq 16T(c+20\sqrt{c}-1)
\]
which is true since $1+12Tc\leq 16T(c+20\sqrt{c}-1)$ for large $c$ and the rest of the terms are less than $1$ when $t+1\leq T$.

\end{proof}

\subsection{Planted Feedback Vertex Set}
We prove Lemma \ref{lemma:cycle-through-every-vertex} by the second moment method.
\begin{proof}[Proof of Lemma \ref{lemma:cycle-through-every-vertex}]
Let $S\subset V\setminus P$, $|S|\geq (1-\delta)n/10$, $v\in P$. Let $X_v$ denote the number of cycles of size $k$ through $v$ in the subgraph induced by $S\cup\{v\}$. Then, $\E(X_v)=\binom{(1-\delta)n/10}{k-1}p^k$. Using Chebyshev's inequality, we can derive that
\[
\prob{X_v=0}\leq \frac{\var{X_v}}{\E(X_v)^2}.
\]
To compute the variance of $X_v$, we write $X_v=\sum_{A\subseteq S:|A|=k-1} X_{A}$, where the random variable $X_A$ is $1$ when the vertices in $A$ induce a cycle of length $k$ with $v$ and $0$ otherwise.
\begin{align*}
\var{X_v}
&\leq \E(X_v)\\
&\ \ +\sum_{A,B\subseteq S:|A|=|B|=k-1,A\neq B} \cov{X_A,X_B}
\end{align*}

Now, for any fixed subsets $A,B\subseteq S$, $|A|=|B|=k-1$ and $|A\cap B|=r$, $\cov{X_A,X_B}\leq p^{2k-r}$ and the number of such subsets is at most $\binom{|S|}{2k-2-r}\binom{k}{r}\leq \binom{n}{2k-2-r}\binom{k}{r}$. Therefore,
\[
\sum_{r=0}^{k-2} \sum_{A,B\subseteq S:|A|=|B|=k-1,|A\cap B|=r} \frac{\cov{X_A,X_B}}{\E(X_v)^2}
\]
\begin{align*}
& \leq \sum_{r=0}^{k-2} \frac{\binom{k}{r}\binom{n}{2k-2-r}p^{2k-r}}{\binom{(1-\delta)n/10}{2k-2}p^{2k}}\\
& \leq \sum_{r=0}^{k-2}\frac{C_r}{(np)^{r}} \\
&\ \ \quad \quad (\text{for some constants $C_r$ dependent on $r,\delta$})\\
& \rightarrow 0
\end{align*}
as $n\rightarrow \infty$ if $p\geq C/n^{1-2/k}$ for some sufficiently large constant $C$ since each term in the summation tends to $0$ and the summation is over a finite number of terms. Thus
\[
\prob{X_v=0}\leq \frac{1}{\binom{(1-\delta)n/10}{k-1}p^k} \leq \frac{1}{((1-\delta)n/10)^{k-1}p^k} .
\]
Therefore,
\[
\prob{X_v\geq 1}\geq 1-\frac{1}{((1-\delta)n/10)^{k-1}p^k}
\]
and hence
\begin{align*}
\prob{X_v\geq 1 \forall v\in P} &\geq \left(1-\frac{1}{((1-\delta)n/10)^{k-1}p^k}\right)^{|P|}\\
&= \left(1-\frac{1}{((1-\delta)n/10)^{k-1}p^k}\right)^{\delta n}\\
&\geq e^{-\frac{10^{k-1}\delta}{2(1-\delta)^{k-1}n^{k-2}p^k}} \rightarrow 1
\end{align*}
as $n\rightarrow \infty$ if $p\geq \frac{C}{n^{1-2/k}}$ for some large constant $C$.
\end{proof}

Finally, we prove Lemma \ref{lemma:expected-no-of-cycles} by computing the expectation.
\begin{proof}[Proof of Lemma \ref{lemma:expected-no-of-cycles}]
$\E(\text{Number of cycles of length $k$})$
\begin{align*}
&\leq \sum_{i=1}^k \binom{|P|}{i}\binom{|R|}{k-i}k!p^k\\
&=\sum_{i=1}^k \binom{\delta n}{i}\binom{(1-\delta)n}{k-i}k!p^k\\
&\leq \sum_{i=1}^k (\delta n)^i((1-\delta)n)^{k-i}(kp)^k
\end{align*}
\begin{align*}
&= ((1-\delta)nkp)^k\sum_{i=1}^k \left(\frac{\delta}{1-\delta}\right)^i\\
&= ((1-\delta)nkp)^k(1-\delta) \leq (nkp)^k.
\end{align*}
\end{proof}

\section{Conclusion}
Several well-known combinatorial problems can be reformulated as hitting set problems with an exponential number of subsets to hit. However, there exist efficient procedures to verify whether a candidate set is a hitting set and if not, output a subset that is not hit. We introduced the implicit hitting set as a framework to encompass such problems. The motivation behind introducing this framework is in obtaining efficient algorithms where efficiency is determined by the running time as a function of the size of the ground set. We initiated the study towards developing such algorithms by showing an algorithm for a combinatorial problem that falls in this framework -- the feedback vertex set problem on random graphs. It would be interesting to extend our results to other implicit hitting set problems mentioned in Section 1.1.

\bibliographystyle{amsalpha}
\bibliography{references}
\end{document}